\documentclass[10pt,twocolumn,twoside]{IEEEtran}

\IEEEoverridecommandlockouts                              	

\usepackage{amssymb,amsmath,amsfonts,amsthm}
\usepackage{algorithm,algorithmic}
\usepackage{cite}
\usepackage{float}
\usepackage{epsfig}
\usepackage{graphicx}
\usepackage{graphics}
\usepackage{subfigure}
\usepackage{url}
\usepackage{xspace}
\usepackage[usenames,dvipsnames]{color}
\usepackage{soul}
\usepackage{mathtools}

\usepackage{multirow}

\usepackage{tikz}
\usetikzlibrary{arrows,quotes}

\setlength{\abovecaptionskip}{0pt plus 0pt minus 0pt} 


\floatstyle{ruled}
\newfloat{model}{H}{mod}
\floatname{model}{\footnotesize Model}
\newfloat{notatio}{H}{not}
\floatname{notatio}{\footnotesize Notation}

\newenvironment{varalgorithm}[1]
  {\algorithm}
  {\endalgorithm}

\newenvironment{list4}{
	\begin{list}{$\bullet$}{%
			\setlength{\itemsep}{0.05cm}
			\setlength{\labelsep}{0.2cm}
			\setlength{\labelwidth}{0.3cm}
			\setlength{\parsep}{0in} 
			\setlength{\parskip}{0in}
			\setlength{\topsep}{0in} 
			\setlength{\partopsep}{0in}
			\setlength{\leftmargin}{0.16in}}}
	{\end{list}}

\newenvironment{list4a}{
	\begin{list}{$\bullet$}{%
			\setlength{\itemsep}{0.05cm}
			\setlength{\labelsep}{0.2cm}
			\setlength{\labelwidth}{0.3cm}
			\setlength{\parsep}{0in} 
			\setlength{\parskip}{0in}
			\setlength{\topsep}{0in} 
			\setlength{\partopsep}{0in}
			\setlength{\leftmargin}{0.16in}}}
	{\end{list}}

\usepackage{url,changebar,bm,xspace,dsfont}
\let\mathbb=\mathds 


\newtheorem{theorem}{Theorem}

\newtheorem{prop}{Proposition}
\newtheorem{assum}{Assumption}

\newtheorem{remark}{Remark}

\ifCLASSINFOpdf
 \else
\fi

\hyphenation{op-tical net-works semi-conduc-tor}

\usepackage{amsmath,amsfonts,amssymb,amscd}
\usepackage{capt-of}
\usepackage{color}
\usepackage{cite}
\usepackage{verbatim}
\usepackage{hyperref}

\hypersetup{
    colorlinks = true,
    linkcolor = [rgb]{0,0,1},
    anchorcolor = [rgb]{0,0,1},
    citecolor = [rgb]{0.9,0.5,0},
    filecolor = [rgb]{0,0,1},
    urlcolor = [rgb]{0.9,0.5,0},
    bookmarksopen = true,
    bookmarksnumbered = true,
    breaklinks = true,
    linktocpage,
    colorlinks = true,
    linkcolor = [rgb]{0.2,0.6,0.2},
    urlcolor  = [rgb]{0,0,1},
    citecolor = [rgb]{0.9,0.5,0},
    anchorcolor = [rgb]{0.2,0.6,0.2},
}

\newtheorem{lemma}{\bfseries Lemma}

\hyphenation{op-tical net-works semi-conduc-tor}

\begin{document}

\title{\LARGE \bf Distributed Optimization with Efficient Communication, Event-Triggered Solution Enhancement, and Operation Stopping}


\author{Apostolos~I.~Rikos, Wei~Jiang, Themistoklis~Charalambous, and
 Karl~H.~Johansson
    \thanks{Apostolos~I.~Rikos is with the Artificial Intelligence Thrust of the Information Hub, The Hong Kong University of Science and Technology (Guangzhou), Guangzhou, China. 
    He is also affiliated with the Department of Computer Science and Engineering, The Hong Kong University of Science and Technology, Clear Water Bay, Hong Kong. 
    E-mail: {\tt~apostolosr@hkust-gz.edu.cn}.}
    \thanks{Wei~Jiang was with the Department of Electrical Engineering and Automation, School of Electrical Engineering, Aalto University, Espoo, Finland. 
    E-mail:{\tt~wjiang.lab@gmail.com}.}
    \thanks{Themistoklis~Charalambous is with the Department of Electrical and Computer Engineering, University of Cyprus, 1678 Nicosia, Cyprus. 
    He is also a Visiting Professor with the Department of Electrical Engineering and Automation, School of Electrical Engineering, Aalto University. 
    E-mail:{\tt~charalambous.themistoklis@ucy.ac.cy}.}
    \thanks{K.~H.~Johansson is with the Division of Decision and Control Systems, KTH Royal Institute of Technology, SE-100 44 Stockholm, Sweden. 
    He is also affiliated with Digital Futures. 
    E-mail:{\tt~kallej@kth.se}.}
    \thanks{Preliminary results of this work were presented at the $2023$ IFAC World Congress \cite{2023:Rikos_Johan_IFAC}, the $2023$ IEEE Conference on Decision and Control \cite{2023:Rikos_Wei_Themis_Johan_CDC_Zooming}, and the $2024$ IEEE Conference on Decision and Control \cite{2024:Rikos_Wei_Themis_Johan_CDC_3bit}. 
    The current version of our paper includes (i) improved versions of the algorithms in \cite{2023:Rikos_Wei_Themis_Johan_CDC_Zooming} and \cite{2024:Rikos_Wei_Themis_Johan_CDC_3bit} with extended operational capabilities (i.e., the improved version of the algorithm in \cite{2023:Rikos_Wei_Themis_Johan_CDC_Zooming} is capable of terminating its operation under an updated stopping condition that also considers the gradient norms of the nodes, and in the improved version of the algorithm in \cite{2024:Rikos_Wei_Themis_Johan_CDC_3bit} the algorithm exhibits operation termination capabilities), (ii) the complete and improved versions of the convergence proofs which establish linear convergence, while allowing the step size to be chosen from a broader range of available values, (iii) an application for distributed target localization along with an extended analysis of the communication requirements of each algorithm, and (iv) extended comparisons with algorithms from the literature.}
    \thanks{This work was supported by the Guangzhou-HKUST(GZ) Joint Funding Scheme (Grant No. 2025A03J3960).
    This work was also supported by the Knut and Alice Wallenberg Foundation and the Swedish Foundation for Strategic Research. 
    It was also partially supported by the MINERVA project, funded by the European Research Council (ERC) under the Horizon 2022 research and innovation program of the European Union (Grant agreement No. 101044629).} 
}

\maketitle
\thispagestyle{empty}
\pagestyle{empty}

%
%

\begin{abstract} 
In modern large-scale systems with sensor networks and IoT devices it is essential to collaboratively solve complex problems while utilizing network resources efficiently. 
In our paper we present three distributed optimization algorithms that exhibit efficient communication among nodes. 
Our first algorithm presents a simple quantized averaged gradient procedure for distributed optimization, which is shown to converge to a neighborhood of the optimal solution. 
Our second algorithm incorporates a novel event-triggered refinement mechanism, which refines the utilized quantization level to enhance the precision of the estimated optimal solution. 
It enables nodes to terminate their operation according to predefined performance guarantees. 
Our third algorithm is tailored to operate in environments where each message consists of only a few bits. 
It incorporates a novel event-triggered mechanism for adjusting the quantizer basis and quantization level, allowing nodes to collaboratively decide operation termination based on predefined performance criteria. 
We analyze the three algorithms and establish their linear convergence. 
Finally, an application on distributed sensor fusion for target localization is used to demonstrate their favorable performance compared to existing algorithms in the literature. 
\end{abstract}

\begin{IEEEkeywords} 
Optimization, distributed algorithms, quantization, finite-time, refinement, enhancement, target localization. 
\end{IEEEkeywords}

%
%
%
%
\section{Introduction}\label{intro}



\textbf{Motivation.} 
Leveraging vast amounts of data is crucial for addressing complex challenges in large-scale systems. 
Numerous applications involve collecting data from nodes and send them to a central processor. 
The central processor performs computations and subsequently distributes the results back to the nodes 
\cite{2017_Lavaei_Sandberg_Florian, 2022_Annaswamy, 2021_Meng_Yang_Johansson_Book}. 
However, this centralized architecture is susceptible to a single point of failure, demands significant computational power, and lacks flexibility, scalability, and privacy. 
As a result, distributed data processing has emerged as a viable solution \cite{2021_Meng_Yang_Johansson_Book}. 

Distributed approaches have gained significant attention due to their ability to efficiently solve complex problems across large networked systems. 
They encompass a wide range of applications such as localization, estimation, and target tracking (see \cite{Tao:2019Survey, 2024_doostmohammadian_rikos_Johansson_survey} and references therein). 
A particularly important problem within this field is distributed optimization \cite{2018_liu_nedic_distr_opt_survey}. 
Distributed optimization enables multiple nodes to collaboratively solve optimization problems with each node only possessing partial information about the global state and objective. 
This strategy enhances scalability, performance, fault tolerance, adaptability, and efficient utilization of network resources. 
Due to its various advantages, distributed optimization has become increasingly vital across various domains, such as machine learning \cite{2020:Nedich}, control \cite{SEYBOTH:2013}, and other data-driven applications that handle substantial amounts of data \cite{2018:Stich_Jaggi}. 

Distributed optimization has received significant attention from the scientific community. 
Key approaches include primal-based methods (e.g., gradient descent) and dual-based techniques (e.g., ADMM), with this paper focusing on the primal-based methods. 
Early algorithms relied on specific network structures, such as undirected or weight-balanced directed networks \cite{2009:Nedic_Optim, 2018:Xie}. 
However, undirected communication is often impractical in wireless networks due to asymmetric signal strengths caused by device heterogeneity and interference \cite{2005:Bhaskar_book}. 
Moreover, balancing directed networks requires computationally intensive network preprocessing \cite{2014:TCNS_June, gharesifard2012distributed, 2014:ISCCSP2}.
To address these limitations, researchers have developed algorithms for directed, non-weight-balanced networks \cite{2018:Khan_AB, 2021:Nedic_PushPull}, better suited to real-world wireless communication constraints. 
These algorithms involve nodes exchanging real-valued messages and alternating between local optimization and coordination steps with neighbors. 
However, this approach can lead to slow convergence and approximate solutions near the optimum rather than exact optimality. 
For enhancing convergence speed and accuracy, recent studies introduced approaches that implement a finite-time coordination protocol between optimization steps. 
In these approaches, nodes communicate using real-valued messages through asymptotic coordination methods that converge after a specified number of iterations \cite{2024:Khatana_Salapaka_optimization, 2019:Wei_Berahas_optimization}. 
So far, a common feature of all the previously mentioned works is their reliance on exchanging real-valued messages between nodes. 
This causes high communication overhead and challenges in bandwidth- and energy-constrained environments. 
To address this, researchers developed communication-efficient distributed optimization algorithms \cite{2017:Ye_Zeilinger_Jones, 2019:Koloskova_Jaggi, 2020:Jadbabaie_Federated}. 
However, existing approaches are limited to undirected or weight-balanced directed networks.
Additionally, they primarily focus on quantizing real-valued algorithms, preserving their asymptotic convergence behavior without improving convergence rates. 
Therefore, despite reduced communication overhead, these methods often converge slowly to approximate solutions. 
Recent works aimed to address limitations in communication-efficient distributed optimization algorithms by focusing on directed, unbalanced networks. 
These approaches implement finite-time quantized coordination protocols between optimization steps \cite{2023:Bastianello_Rikos_Johansson, 2024:Bastianello_Rikos_Johansson}. 
However, they only converge to a neighborhood of the optimal solution, leaving a significant error floor that may be considerably large and hence unacceptable for the considered application. 
In contrast, the algorithms in \cite{2024_Rikos_Themis_Johan_TCNS_CPU} operate over directed unbalanced networks that compute the exact optimal solution but are restricted to quadratic local cost functions. 


To the best of our knowledge (with the exception of our works in \cite{2023:Bastianello_Rikos_Johansson, 2024:Bastianello_Rikos_Johansson, 2024_Rikos_Themis_Johan_TCNS_CPU}) the existing literature lacks distributed optimization algorithms that combine efficient communication and processing with several crucial features: (i) the ability to operate over networks without requiring undirected communication or weight-balanced properties, (ii) the capability to achieve finite-time convergence to either the exact optimal solution or a close approximation depending on the quantization level, (iii) adaptive mechanisms to refine the optimal solution estimation based on application-specific requirements, and (iv) a definitive termination criterion. 
The absence of these features in current state-of-the-art algorithms limits their applicability in realistic scenarios. 
Specifically, it results in limitations such as inefficient use of computational resources, inability to approximate the optimal solution to a level desirable for the corresponding application, and inability to transition to other tasks once the desired optimization is solved, highlighting a significant gap in the field of distributed optimization. 

\textbf{Main Contributions.} 
We aim to overcome the limitations of existing approaches in the field of distributed optimization by introducing three novel algorithms. 
Our algorithms combine gradient descent with a finite-time quantized coordination protocol. 
The main advantages of our algorithms are that they (i) exhibit efficient communication among nodes by enabling them to exchange quantized messages, and (ii) incorporate an event-triggered mechanism that enhances the estimation of the optimal solution based on predefined performance guarantees. 
Our main contributions are the following: 
\begin{list4a} 
    \item We introduce a novel distributed optimization algorithm that facilitates efficient communication among nodes by allowing them to exchange quantized messages (Algorithm~\ref{algorithm_1}). 
    We demonstrate its linear convergence rate (Theorem~\ref{theorem_convergence_stronglyConvex}). 
    We analyze its relationship between accuracy of the estimated optimal solution and the level of quantization. 
    \item Building on Algorithm~\ref{algorithm_1}, we propose an enhanced distributed optimization algorithm (Algorithm~\ref{algorithm_2}) that incorporates a novel event-triggered mechanism. 
    This mechanism allows nodes to either refine the precision of the estimated optimal solution or terminate their operation based on predefined performance guarantees. 
    Furthermore, nodes can accurately estimate the exact optimal solution with a suitable choice of parameters.
    We analyze its operation and prove its linear convergence rate (Theorem~\ref{theorem_convergence_stronglyConvex_2}). 
    \item Extending Algorithm~\ref{algorithm_1}, we develop a distributed optimization algorithm (Algorithm~\ref{algor_4}) tailored for environments with strict bandwidth constraints, where each message is limited to $N$ bits. 
    It features an adaptive event-triggered mechanism for adjusting both the quantizer basis and quantization level. 
    This enables efficient communication while ensuring distributed stopping capabilities based on performance criteria.
    \item We evaluate all proposed algorithms in a distributed target localization application, analyzing their performance in terms of convergence speed, solution precision, and communication requirements. 
    Additionally, we compare their performance against state-of-the-art algorithms from the literature.
\end{list4a} 

\textbf{Paper Organization.}
Our paper is structured as follows: 
In Section~\ref{sec:compar_operation}, we compare the operation of our proposed algorithms against existing algorithms from the literature.
In Section~\ref{sec:preliminaries}, we present the notation and essential background information. 
Section~\ref{sec:probForm} outlines the problem formulation. 
In Sections~\ref{sec:distr_opt_quant}, \ref{sec:distr_grad_zoom_quant}, and \ref{sec:distr_grad_zoom_quant_3_bit}, we introduce our distributed algorithms along with their convergence analysis. 
In Section~\ref{sec:results}, we present an application of our algorithms and compare their performance with algorithms from the literature.
Finally, we conclude the paper in Section~\ref{sec:conclusions}. 

%
%
%
%

\section{Related Works}\label{sec:compar_operation}

In this section, we compare the operation of our proposed algorithms with other algorithms from the literature. 
Subsequently, we present a table that summarizes our comparisons offering a clearer overview of the operation and highlighting the advantages of our algorithms over existing ones.


In the literature of distributed optimization, various algorithms have been proposed, each exhibiting distinct characteristics and advantages.
The study in \cite{2009:Nedic_Optim} proposes a subgradient method designed for weight-balanced directed communication networks. 
It allows nodes to exchange real-valued messages and achieves geometric convergence to the optimal solution. 
Similarly, \cite{2018:Xie} presents an asynchronous distributed gradient algorithm for undirected communication networks, enabling real-valued message exchange and attaining linear convergence. 
The work in \cite{yue2021distributed} introduces an adaptive framework for distributed optimization. 
It operates on weight-balanced directed communication networks and facilitates the exchange of real-valued messages among nodes. 
As previously mentioned, achieving undirected communication is often challenging in real-world scenarios. 
Additionally, balancing directed communication networks requires precomputing which can be computationally intensive. 
Given these challenges, researchers redirected their efforts towards developing algorithms for directed networks (not required to be weight-balanced). 
For instance, \cite{2017:Chenguang_Khan} introduces a distributed optimization algorithm for networks that are both row-stochastic and column-stochastic, enabling nodes to exchange real-valued messages. 
Similarly, \cite{nedic2017achieving} presents an algorithm designed for directed time-varying networks, supporting real-valued message exchange and achieving a linear convergence rate. 
The work in \cite{xi2017dextra} proposes an real-valued communication algorithm for directed column-stochastic networks that achieves linear convergence. 
In \cite{2018:Khan_AB}, the proposed algorithm operates over networks that are both column-stochastic and row-stochastic, achieving geometric convergence through real-valued communication among nodes. 
In a similar manner, \cite{2018:Khan_addopt} introduces an algorithm for column-stochastic networks with linear convergence. 
Finally, the work in \cite{2021:Nedic_PushPull} presents a method for column-stochastic and row-stochastic networks, enabling real-valued communication among nodes and achieving linear convergence. 
Aforementioned works follow a pattern of one local optimization step followed by one coordination step with neighboring nodes. 
This can result in slow convergence and approximate solutions near the optimum. 
To overcome this, recent studies aim to improve convergence speed and accuracy by incorporating finite-time coordination between local optimization steps. 
For instance, \cite{2019:Wei_Berahas_optimization} proposes a distributed algorithm with real-valued communication and linear convergence rate. 
However, it operates over an undirected communication network which is doubly stochastic. 
Another study, \cite{2022:Jiang_Charalambous}, operates over strongly connected directed networks, achieving linear convergence via real-valued communication. 
Lastly, \cite{2024:Khatana_Salapaka_optimization} introduces an algorithm for column-stochastic directed networks. 
It employs real-valued communication, achieving Q-linear convergence rate. 
It is important to note that, so far all existing works rely on real-valued message exchange between nodes, incurring high communication overhead. 
This poses challenges in bandwidth- and energy-constrained environments. 
To address this, researchers started developing distributed optimization algorithms that incorporate communication-efficient strategies. 
However, while the following algorithms achieve efficient communication, it often comes at the cost of requiring either undirected communication networks or directed networks that are weight-balanced. 
Specifically, the algorithm in \cite{2021:Tiancheng_Uribe} enables nodes to exchange information occasionally but operates over an undirected communication network. 
Similarly, \cite{2014:Peng_Yiguan}, \cite{2020:Jadbabaie_Federated} facilitate quantized information exchange while relying on undirected networks, with \cite{2020:Jadbabaie_Federated} requiring the existence of a parameter server. 
The work in \cite{2017:Ye_Zeilinger_Jones}  facilitates quantized information exchange over undirected networks while achieving linear convergence, aligning with the aforementioned algorithms. 
The work in \cite{2018:Huaqing_Xie} incorporates quantization and event-triggered communication but operates over weight-balanced time-varying directed graphs.
Compression-based methods are explored in \cite{2019:Koloskova_Jaggi}, which achieves linear convergence over undirected graphs, and \cite{2019:Basu_Diggavi}, which combines quantization, sparsification, and local computations but requires a master node for aggregation.
In \cite{2020:Li_Chi} the proposed efficient communication algorithm exhibits linear convergence while operating over undirected communication networks. 
Adaptive quantization is utilized in \cite{2021:Doan_Romberg} and \cite{2022:Liu_Daniel}, with the former operating over undirected graphs and the latter over weight-balanced directed networks.
Quantized communication is employed in \cite{2020:Magnusson_NaLi}, which achieves linear convergence rate while operating over an undirected graph. 
The works \cite{2020:Doostmohammadian_Charalambous} and \cite{2022:Doostmohammadian_Rikos_Johansson_Themistoklis} operate over undirected and weight-balanced directed networks, respectively, while relying on quantization for communication efficiency. 
Event-triggered approaches are explored in \cite{2022:Gao_Chaojie}, which achieves linear convergence over undirected networks. 
Consensus-based quantized optimization is proposed in \cite{2021:Kajiyama_Takai}, exhibiting linear convergence while operating on weight-balanced directed networks. 
Finally, the algorithm in \cite{2023:Ziyi_Freris} achieves communication efficiency and linear convergence while operating over undirected networks. 
However, as previously discussed, existing communication-efficient methods are limited to specific networks and often exhibit slow convergence (typically reaching approximate solutions rather than the exact optimum). 
This is due to their reliance on quantizing real-valued optimization algorithms, preserving the original asymptotic convergence properties. 
Overall, while these methods reduce communication overhead, they do not significantly improve long-term convergence rates or overall efficiency.
To address these limitations, recent works have explored approaches for directed unbalanced networks. 
These methods implement a finite-time quantized coordination algorithm between local optimization steps \cite{2023:Bastianello_Rikos_Johansson, 2024:Bastianello_Rikos_Johansson}. 
However, these works only achieve convergence to a neighborhood of the optimal solution, resulting in an error floor that may be unacceptable for certain applications.
Finally, the proposed efficient communication algorithms in \cite{2024_Rikos_Themis_Johan_TCNS_CPU} operate over directed unbalanced networks and compute the exact optimal solution.
However, they are restricted to cases where each node's local cost function is quadratic.

Our proposed Algorithm~\ref{algorithm_1}, Algorithm~\ref{algorithm_2}, and Algorithm~\ref{algor_4}, are able to operate over directed communication topologies. 
Additionally, they do not require the existence of a doubly-stochastic or weight balanced network or a master node for aggregating the computation result of every node in the network. 
This flexibility enhances their applicability across diverse scenarios. 
Finally, a notable innovation in Algorithm~\ref{algorithm_2}, and Algorithm~\ref{algor_4} are the operation termination mechanisms. 
This feature is not present in any algorithm in the literature. 
Algorithm~\ref{algorithm_2}, and Algorithm~\ref{algor_4} are the first algorithms in the literature able to terminate the optimization operation in a distributed fashion according to performance criteria that rely on convergence accuracy. 
This feature represents a significant advancement in distributed optimization algorithms, allowing for more efficient and controlled optimization processes.

In Table~\ref{table_compar} we present a summary of our comparisons. 
The focus is on: (a) the use of directed networks, (b) the absence of requirements for doubly stochastic or weight-balanced networks and a central aggregation node, (c) convergence speed, (d) communication efficiency, and (e) performance-based stopping criteria. 

\begin{table}[h] 
\centering
\caption{Comparison of our algorithms with other works in the literature. 
The metrics are: (a) utilization of directed network, (b) absence of doubly stochastic network, weight balanced network and a master node for aggregation, (c) convergence rate, (d) efficient communication, and (e) stopping capabilities according to performance.}
\label{table_compar} 
\begin{tabular}{|c|c|c|c|c|c|}
\hline 
\multirow{2}{*}{}
& (A)    & (B)    & (C)    & (D)    & (E)    \\ \hline
\cite{2009:Nedic_Optim}    & \textcolor{green}{\checkmark}    & \textcolor{red}{\text{\sffamily X}}    & Geometric   & \textcolor{red}{\text{\sffamily X}}    & \textcolor{red}{\text{\sffamily X}}    \\ 
\hline
\cite{2018:Khan_AB}    & \textcolor{green}{\checkmark}    & \textcolor{green}{\checkmark}    & Geometric   & \textcolor{red}{\text{\sffamily X}}    & \textcolor{red}{\text{\sffamily X}}    \\ 
\hline
\cite{2018:Khan_addopt}    & \textcolor{green}{\checkmark}    & \textcolor{green}{\checkmark}    & Linear   & \textcolor{red}{\text{\sffamily X}}    & \textcolor{red}{\text{\sffamily X}}    \\ 
\hline
\cite{2021:Nedic_PushPull}    & \textcolor{green}{\checkmark}    & \textcolor{green}{\checkmark}    & Linear   & \textcolor{red}{\text{\sffamily X}}    & \textcolor{red}{\text{\sffamily X}}    \\ 
\hline
\cite{2018:Xie}    & \textcolor{red}{\text{\sffamily X}}    & \textcolor{red}{\text{\sffamily X}}    & Linear   & \textcolor{red}{\text{\sffamily X}}    & \textcolor{red}{\text{\sffamily X}}    \\ 
\hline
\cite{2024:Khatana_Salapaka_optimization}    & \textcolor{green}{\checkmark}    & \textcolor{green}{\checkmark}    & Q-linear    & \textcolor{red}{\text{\sffamily X}}    & \textcolor{red}{\text{\sffamily X}}    \\ 
\hline
\cite{2022:Jiang_Charalambous}    & \textcolor{green}{\checkmark}    & \textcolor{green}{\checkmark}    & Linear    & \textcolor{red}{\text{\sffamily X}}    & \textcolor{red}{\text{\sffamily X}}    \\ 
\hline
\cite{2017:Chenguang_Khan}    & \textcolor{green}{\checkmark}    & \textcolor{green}{\checkmark}    & $\mathcal{O}\left(\frac{\ln k}{\sqrt{k}}\right)$    & \textcolor{red}{\text{\sffamily X}}    & \textcolor{red}{\text{\sffamily X}}    \\ 
\hline
\cite{2021:Tiancheng_Uribe}    & \textcolor{red}{\text{\sffamily X}}    & \textcolor{red}{\text{\sffamily X}}    & $\mathcal{O}(1 / nk)$   & \textcolor{green}{\checkmark}    & \textcolor{red}{\text{\sffamily X}}    \\ 
\hline
\cite{2014:Peng_Yiguan}    & \textcolor{red}{\text{\sffamily X}}    & \textcolor{green}{\checkmark}    & (n/a)    & \textcolor{green}{\checkmark}    & \textcolor{red}{\text{\sffamily X}}    \\ 
\hline
\cite{2017:Ye_Zeilinger_Jones}    & \textcolor{red}{\text{\sffamily X}}    & \textcolor{green}{\checkmark}    & Linear    & \textcolor{green}{\checkmark}    & \textcolor{red}{\text{\sffamily X}}    \\ 
\hline
\cite{2018:Huaqing_Xie}    & \textcolor{green}{\checkmark}    & \textcolor{red}{\text{\sffamily X}}    & $\mathcal{O}\left(\frac{\ln k}{\sqrt{k}}\right)$    & \textcolor{green}{\checkmark}    & \textcolor{red}{\text{\sffamily X}}    \\ 
\hline
\cite{2019:Koloskova_Jaggi}    & \textcolor{red}{\text{\sffamily X}}    & \textcolor{red}{\text{\sffamily X}}    & Linear    & \textcolor{green}{\checkmark}    & \textcolor{red}{\text{\sffamily X}}    \\ 
\hline
\cite{2019:Basu_Diggavi}    & \textcolor{red}{\text{\sffamily X}}    & \textcolor{red}{\text{\sffamily X}}    & $\mathcal{O}(\frac{1}{\log k})$    & \textcolor{green}{\checkmark}    & \textcolor{red}{\text{\sffamily X}}    \\ 
\hline
\cite{2020:Li_Chi}    & \textcolor{red}{\text{\sffamily X}}    & \textcolor{red}{\text{\sffamily X}}    & Linear   & \textcolor{green}{\checkmark}    & \textcolor{red}{\text{\sffamily X}}    \\ 
\hline
\cite{2021:Doan_Romberg}    & \textcolor{red}{\text{\sffamily X}}    & \textcolor{red}{\text{\sffamily X}}    & $\mathcal{O}\left(\frac{\ln k}{\sqrt{k}}\right)$    & \textcolor{green}{\checkmark}    & \textcolor{red}{\text{\sffamily X}}    \\ 
\hline
\cite{2020:Jadbabaie_Federated}    & \textcolor{red}{\text{\sffamily X}}    & \textcolor{red}{\text{\sffamily X}}    & $\mathcal{O}\left(\frac{1}{\sqrt{k}}\right)$    & \textcolor{green}{\checkmark}    & \textcolor{red}{\text{\sffamily X}}    \\ 
\hline
\cite{2020:Magnusson_NaLi}    & \textcolor{red}{\text{\sffamily X}}    & \textcolor{red}{\text{\sffamily X}}    & Linear    & \textcolor{green}{\checkmark}    & \textcolor{red}{\text{\sffamily X}}    \\ 
\hline
\cite{2022:Gao_Chaojie}    & \textcolor{red}{\text{\sffamily X}}    & \textcolor{red}{\text{\sffamily X}}    & Linear    & \textcolor{green}{\checkmark}    & \textcolor{red}{\text{\sffamily X}}    \\ 
\hline
\cite{2021:Kajiyama_Takai}    & \textcolor{green}{\checkmark}    & \textcolor{red}{\text{\sffamily X}}    & Linear    & \textcolor{green}{\checkmark}    & \textcolor{red}{\text{\sffamily X}}    \\ 
\hline
\cite{2022:Liu_Daniel}    & \textcolor{green}{\checkmark}    & \textcolor{red}{\text{\sffamily X}}    & $\mathcal{O}\left(\frac{1}{{k}^{1/2}}\right)$    & \textcolor{green}{\checkmark}    & \textcolor{red}{\text{\sffamily X}}    \\ 
\hline
\cite{2022:Doostmohammadian_Rikos_Johansson_Themistoklis}    & \textcolor{green}{\checkmark}    & \textcolor{red}{\text{\sffamily X}}    & Linear    & \textcolor{green}{\checkmark}    & \textcolor{red}{\text{\sffamily X}}    \\ 
\hline
\cite{2023:Ziyi_Freris}    & \textcolor{red}{\text{\sffamily X}}    & \textcolor{green}{\checkmark}    & Linear    & \textcolor{green}{\checkmark}    & \textcolor{red}{\text{\sffamily X}}    \\ 
\hline
Algorithm~\ref{algorithm_1}    & \textcolor{green}{\checkmark}    & \textcolor{green}{\checkmark}    & Linear    & \textcolor{green}{\checkmark}    & \textcolor{red}{\text{\sffamily X}}    \\ 
\hline
Algorithm~\ref{algorithm_2}    & \textcolor{green}{\checkmark}    & \textcolor{green}{\checkmark}    & Linear    & \textcolor{green}{\checkmark}    & \textcolor{green}{\checkmark}    \\ 
\hline
Algorithm~\ref{algor_4}    & \textcolor{green}{\checkmark}    & \textcolor{green}{\checkmark}    & Linear    & \textcolor{green}{\checkmark}    & \textcolor{green}{\checkmark}    \\ \hline 
\end{tabular} 
\end{table}

%
%
%
%
\section{NOTATION AND BACKGROUND}\label{sec:preliminaries}


\textbf{Mathematical Notation and Symbols.}  
We use the symbols $\mathbb{R}$, $\mathbb{Q}$, $\mathbb{Z}$, and $\mathbb{N}$ to represent the sets of real, rational, integer, and natural numbers, respectively. 
The set of positive integers is $\mathbb{Z}_{>0}$, the set of nonnegative real numbers is $\mathbb{R}_{\geq 0}$, and the set of rational numbers greater than one is $\mathbb{Q}_{> 1}$. 
The symbol $\mathbb{R}^n_{\geq 0}$ denotes the nonnegative orthant of the $n$-dimensional real space $\mathbb{R}^n$. 
Matrices are indicated by capital letters (e.g., $A$), and vectors are represented by lowercase letters (e.g., $x$). 
To transpose a matrix $A$ or vector $x$, we use the notation $A^\top$ and $x^\top$, respectively. 
For a matrix $A \in \mathbb{R}^{n \times n}$, the entry in row $i$ and column $j$ is represented as $a_{ij}$. 
We use $\mathbb{1}$ to refer to an all-ones vector and $\mathbb{I}$ to refer to the identity matrix of the appropriate size. 
The floor $\lfloor a \rfloor$ of a real number $a$ denotes the largest integer less than or equal to $a$, while the ceiling function $\lceil a \rceil$ denotes the smallest integer greater than or equal to $a$. 
The Euclidean norm of a vector is represented by $| \cdot |$, and the standard derivative of a function is indicated by $\nabla$. 

\textbf{Graph-Theoretic Notions.} 
The communication network is described by a directed graph (digraph) denoted as $\mathcal{G} = (\mathcal{V}, \mathcal{E})$. 
The set of nodes is represented as $\mathcal{V} = \{ v_1, v_2, ..., v_n \}$, and the set of edges is defined as $\mathcal{E} \subseteq \mathcal{V} \times \mathcal{V} \cup \{ (v_i, v_i) \ | \ v_i \in \mathcal{V} \}$ (each node has a virtual self-edge). 
The number of nodes and edges is $| \mathcal{V} | = n$ and $| \mathcal{E} | = m$, respectively. 
A directed edge from node $v_i$ to node $v_l$ is represented by $(v_l, v_i) \in \mathcal{E}$.
It signifies that node $v_l$ can receive information from node $v_i$ at time step $k$, but not the reverse. 
The set of nodes that can directly transmit information to node $v_i$ is the in-neighbors of $v_i$, denoted as $\mathcal{N}_i^- = \{ v_j \in \mathcal{V} ; | ; (v_i, v_j)\in \mathcal{E}\}$. 
Similarly, the set of nodes that can directly receive information from $v_i$ is the out-neighbors of $v_i$, represented by $\mathcal{N}_i^+ = \{ v_l \in \mathcal{V} ; | ; (v_l, v_i)\in \mathcal{E}\}$. 
The in-degree and out-degree of $v_i$ are denoted $\mathcal{D}_i^- = | \mathcal{N}_i^- |$ and $\mathcal{D}_i^+ = | \mathcal{N}_i^+ |$, respectively.
A directed path from $v_i$ to $v_l$ with a length of $t$ exists if a sequence of nodes $i \equiv l_0,l_1, \dots, l_t \equiv l$ can be found, satisfying $(l_{\tau+1},l_{\tau}) \in \mathcal{E}$ for $ \tau = 0, 1, \dots , t-1$. 
The diameter $D$ of $\mathcal{G}$ is defined as the longest shortest path between any two nodes $v_i$ and $v_l$ within $\mathcal{V}$.
A digraph is strongly connected if there exists a directed path from every node $v_i$ to every other node $v_l$, for all pairs of nodes $v_i, v_l \in \mathcal{V}$. 

\textbf{Quantizers.} 
Quantization serves as a technique for reducing the number of bits necessary to represent information. 
It can enhance transmission power efficiency and computational effectiveness. 
%
There are many types of quantizers and can be categorized based on different characteristics, such as the uniformity of the step/interval size, the symmetry with respect to its basis, and the adaptability. 
Even though the results of this paper work for different types of quantizers (e.g., logarithmic), in this paper, we will consider symmetric uniform\footnote{Uniform quantizers partition the value range into equally-sized intervals.} scalar quantizers. 
Such quantizers can be further divided into mid-tread\footnote{A type of uniform quantizer where small input values (close to zero) are quantized to zero.} and uniform mid-rise\footnote{A type of uniform quantizer where the quantization levels do not include zero, meaning that small input values are not mapped to zero.} quantizers \cite{jayant1984digital}. 
For this paper, we will focus on symmetric mid-rise quantizers \cite{2019:Wei_Johansson} with \emph{i)} infinite range and \emph{ii)} bounded range ($N$-bits) and base shifting capabilities \cite{jayant1984digital}. 

First, we consider the quantization with infinite range to illustrate how our algorithms operate with quantized communication. 
The symmetric mid-rise quantizer with infinite range can be represented as  
\begin{equation}\label{asymm_quant_defn} 
    q_{\Delta}^s(\xi) = \Delta \Bigl ( \Bigl \lfloor \frac{\xi}{\Delta} \Bigr \rfloor + \frac{1}{2} \Bigr ) , 
\end{equation}
where $\xi \in \mathbb{R}$ represents the value to be quantized, the quantization level, and $q_{\Delta}^s(\xi) \in \mathcal{Q}$ is a countably infinite set and can be represented by bits through a coding scheme $b:\mathcal{Q}\mapsto \mathcal{B}$, where $\mathcal{B}$ the set of binary representations of the elements in $\mathcal{Q}$. 

We now consider a quantizer with a bounded range, which is particularly useful in practical applications where bandwidth constraints are a concern. For simplicity, we consider fixed-length coding (FLC), i.e., the same number of bits are assigned to every quantization step, regardless of how frequently it appears (e.g., if we have $8$ quantization steps, each would be assigned $3$ bits). 
The $N$-bit mid-rise uniform quantizer with a bounded range is defined as
\begin{align*} 
Q^{\text{$N$MRU}}&(b_q, \xi, \Delta) =b_q \\ 
&+ \left( \max\left(-N, \min\left(N, \left\lfloor \frac{\xi}{\Delta} \right\rfloor\right)\right) + \frac{1}{2} \right) \Delta,
\end{align*} 
where $\xi \in \mathbb{R}$ is the value to be quantized, $b_q \in \mathbb{Q}$ the basis of the quantizer,  $Q^{\text{$N$MRU}}(b_q, \xi,\Delta)$ the quantized version of $\xi$, and $\Delta \in \mathbb{Q}$ the quantization level. 
For example, the $3$-bit mid-rise uniform quantizer 
is defined as 
\begin{align*}
&Q^{\text{$3$MRU}}(b_q, \xi, \Delta)  \\
    &=\left\{ \begin{array}{ll}
         b_q - 7 \Delta / 2, & \xi \in (-\infty, b_q-3\Delta) \\ \ \vspace{-.3cm} \\ 
         b_q - 5 \Delta / 2, & \xi \in [b_q-3\Delta, b_q-2\Delta) \\ \ \vspace{-.3cm} \\ 
         b_q - 3 \Delta / 2, & \xi \in [b_q-2\Delta, b_q-\Delta) \\ \ \vspace{-.3cm} \\ 
         b_q - \Delta / 2, & \xi \in [b_q-\Delta, b_q) \\ \ \vspace{-.3cm} \\ 
         b_q + \Delta / 2, & \xi \in [b_q, b_q + \Delta) \\ \ \vspace{-.3cm} \\ 
         b_q + 3 \Delta / 2, & \xi \in [b_q + \Delta, b_q + 2\Delta) \\ \ \vspace{-.3cm} \\ 
         b_q + 5 \Delta / 2, & \xi \in [b_q + 2 \Delta, b_q + 3 \Delta) \\ \ \vspace{-.3cm} \\ 
         b_q + 7 \Delta / 2, & \xi \in [b_q + 3 \Delta, + \infty) \\ \ \vspace{-.3cm} \\
    \end{array} \right.
\end{align*} 
where $\xi \in \mathbb{R}$ is the value to be quantized, $b_q \in \mathbb{Q}$ the basis of the quantizer,  $Q^{\text{$3$MRU}}(b_q, \xi,\Delta)$ the quantized version of $\xi$, and $\Delta \in \mathbb{Q}$ the quantization level. 
The binary representation using the FLC scheme is denoted by $b^{\text{FLC}}$, such that $b^{\text{FLC}}:\mathcal{Q}\mapsto \mathcal{B}^{3}$, where $\mathcal{B}^{3}$ denotes the set of all $3$-bit binary values. 
A graphical representation is shown in Fig.~\ref{3bit_Mid_Rise_Uniform_Quantizer}. 

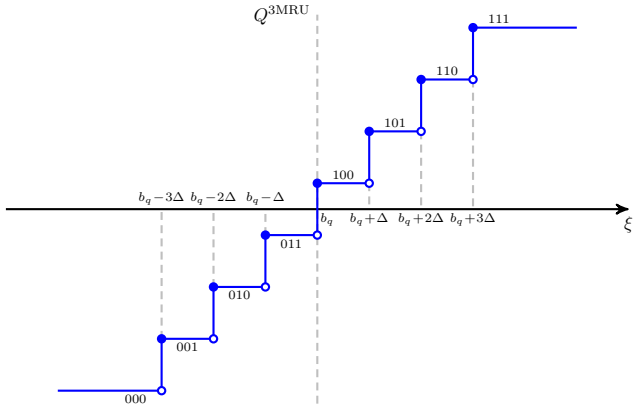
\begin{figure}[h]
\centering
\begin{tikzpicture}[scale=0.69,transform shape,
    thick,
    >=stealth',
     empty dot/.style = { circle, draw, fill = white!0,
                          inner sep = 0pt, minimum size = 4pt },
    filled dot/.style = { empty dot, fill = black}
  ]
  \def\r{2}
  \draw[->] (-6,0) -- (6,0) coordinate[label = {below:$\xi$}] (xmax);
  \draw[densely dashed, draw=lightgray] (0,-3.75) -- (0,3.75) coordinate[label = {left:$Q^{3\mathrm{MRU}}$}]  (ymax);
 \draw [draw=blue] (3,3.5) -- (3,2.5);
\node [filled dot, draw=blue, fill=blue] at (3,3.5) {};
\draw  [draw=blue] (3,3.5) -- (5,3.5); 
  \foreach \i in {\r+1,\r,\r-1} {
    \draw [draw=blue] (\i-1,\i-1/2)   -- (\i,\i-1/2);
    \draw [draw=blue] (\i-1,\i-1/2)   -- (\i-1,\i-3/2);
    \draw [densely dashed, draw=lightgray] (\i,\i-1/2) -- (\i,0);
    \node [filled dot, draw=blue, fill=blue] at (\i-1,\i-1/2) {};
    \node [empty  dot,draw=blue] at (\i,\i-1/2) {};
  }
   \draw[draw=blue] (-3,-3.5) -- (-5,-3.5);
  
  \foreach \i in {\r-5,\r-4,\r-3} {
    \draw [draw=blue] (\i,\i+1/2)   -- (\i+1,\i+1/2);
    \draw [draw=blue] (\i,\i+1/2)  -- (\i,\i-1/2);
    \draw [densely dashed, draw=lightgray] (\i,\i+1/2) -- (\i,0);
    \node [filled dot, draw=blue, fill=blue] at (\i,\i+1/2) {};
  }
  \foreach \i in {\r-6,\r-5,\r-4,\r-3} {
  \node [empty  dot,draw=blue] at (\i+1,\i+1/2) {};
 }

  \node ["right:\scriptsize $b_q$"] at (\r-2.2,-0.2) {};
  \node ["below:\scriptsize $b_q\!+\!\Delta$"]     at (\r-1,0.18)   {};
  \node ["below:\scriptsize $b_q\!+\!2\Delta$"]     at (\r,0.18)   {};
  \node ["below:\scriptsize $b_q\!+\!3\Delta$"]     at (\r+1,0.18)   {};
    
  \node ["above:\scriptsize $b_q\!-\!\Delta$"]     at (\r-3,-0.18)   {};  
  \node ["above:\scriptsize $b_q\!-\!2\Delta$"]     at (\r-4,-0.18)   {}; 
  \node ["above:\scriptsize $b_q\!-\!3\Delta$"]     at (\r-5,-0.18)   {};  
  
  \node ["below:\scriptsize $000$"]     at (\r-5.5,-3.32)   {};  
  \node ["below:\scriptsize $001$"]     at (\r-4.5,-2.32)   {};  
  \node ["below:\scriptsize $010$"]     at (\r-3.5,-1.32)   {};  
   \node ["below:\scriptsize $011$"]     at (\r-2.5,-0.32)   {}; 
   
   \node ["above:\scriptsize $100$"]     at (\r-1.5,0.32)   {}; 
   \node ["above:\scriptsize $101$"]     at (\r-0.5,1.32)   {}; 
   \node ["above:\scriptsize $110$"]     at (\r+0.5,2.32)   {}; 
    \node ["above:\scriptsize $111$"]     at (\r+1.5,3.32)   {}; 
\end{tikzpicture}
\caption{A $3$-bit mid-rise uniform quantizer, $Q^{\text{$3$MRU}}(b_q, \xi,\Delta)$. 
The fixed-length coding scheme (FLC) is also shown for each quantization step.}  
\label{3bit_Mid_Rise_Uniform_Quantizer}
\end{figure}

An important feature of $Q^{\text{$N$MRU}}(b_q, \xi, \Delta)$ compared to $q_{\Delta}^s(\xi)$ is its base-shifting capabilities (i.e., we are able to adjust the value of $b_q$). 
As we will see later in the paper (see Section~\ref{sec:distr_grad_zoom_quant_3_bit}), this characteristic allows us to introduce adaptive quantization strategies with respect to both the quantization step and basis for designing communication efficient algorithms.

%
%
%
%
\section{Problem Formulation}\label{sec:probForm}

We now introduce the optimization problems that serve as the foundation for our proposed distributed algorithms.
Specifically, we outline three problems (named \textbf{P1}, \textbf{P2}, \textbf{P3} below) that address different aspects of distributed optimization under quantized communication.

\subsection{Optimization Problems}

Let us consider a network represented as a digraph $\mathcal{G} = (\mathcal{V}, \mathcal{E})$ and the following setting: 
\begin{list4} 
\item Each node $v_i$ possesses a local cost function $f_i(x): \mathbb{R}^p \mapsto \mathbb{R}$, which is exclusively known to that specific node.   
\item Communication channels among nodes are constrained in capacity, such that only quantized values that can be expressed as rational numbers are allowed for transmission among nodes. 
\end{list4}




\noindent
\textbf{Optimization Problem~P1.}
Our first objective is to design a distributed algorithm that enables nodes to collaboratively find an approximate solution to the following optimization problem:
\begin{subequations}
\begin{align}
\mathbf{P1:}~\min_{x\in \mathcal{X}}~ & F(x_1, x_2, ..., x_n) \equiv \sum_{i=1}^n f_i(x_i), \label{Global_cost_function}  \\
\text{s.t.}~ & x_i = x_j, \forall v_i, v_j, \in \mathcal{V}, \label{constr_same_x}  \\
       & x_i^{[0]} \in \mathcal{X} \subset \mathbb{Q}_{\geq 0}, \forall v_i \in \mathcal{V}, \label{constr_x_in_X} \\
       & b\left(s_i^{[k]}\right) \in \mathcal{B}. \label{constr_quant} 
\end{align} 
\end{subequations}
In \textbf{P1}, $\mathcal{X}$ represents the set of feasible parameter values. 
Also, let $x^*$ denotes the optimal solution to the optimization problem. 
Eq.~\eqref{constr_quant} indicates that during step $k$ of the optimization nodes communicate by transmitting (and, hence, receiving) binary representations (obtained by coding scheme $b:\mathcal{Q}\mapsto \mathcal{B}$) of the quantized values $s_i^{[k]}$ due to bandwidth constraints imposed by the communication channels. 

\noindent
\textbf{Optimization Problem~P2.}
Our second objective is to design a distributed algorithm that enables nodes to collaboratively find an approximate solution to the following optimization problem:
\begin{subequations}
\begin{align}
\mathbf{P2:}~\min_{x\in \mathcal{X}}~ & F(x_1, x_2, ..., x_n) \equiv \sum_{i=1}^n f_i(x_i), \label{Global_cost_function_2}  \\
\text{s.t.}~ & \text{same as} \ \eqref{constr_same_x}, \\
       & \text{same as} \ \eqref{constr_x_in_X}, \\
       & \text{same as} \ \eqref{constr_quant} , \\ 
       & \text{if} \ \| f_i(x_i^{[\gamma_{\beta-1}]}) - f_i(x_i^{[\gamma_\beta]}) \| \leq \varepsilon_{s_1}, \ \text{and} \nonumber \\
       & \text{if} \ \| \nabla f_i(x_i^{[\gamma_\beta]}) \| \leq \varepsilon_{s_2}, \ \forall v_i \in \mathcal{V}, \nonumber \\
       & \text{then terminate operation, } \label{constr_stop} 
\end{align} 
\end{subequations}
where $\beta \in \mathbb{N}$, and $\gamma_\beta$ is the optimization convergence point for which we have $f_i(x_i^{[1 + \gamma_{\beta}]}) = f_i(x_i^{[\gamma_\beta]})$ $\forall v_i \in \mathcal{V}$. 
Eq.~\eqref{constr_stop} means that nodes are tracking the improvement of their local cost function and the norm of their gradient.
Specifically, they are tracking (i) the improvement of their local functions between two consecutive convergence points $\gamma_{\beta + 1}$ and $\gamma_{\beta}$, and (ii) the norm of their gradient at the most recent convergence point. 
If the improvement of the local cost function of every node is less than a predefined threshold $\varepsilon_{s_1}$, and the norm of the gradient at this point is smaller than a predefined threshold $\varepsilon_{s_2}$, then they decide to stop their operation in a distributed way. 

\noindent
\textbf{Optimization Problem~P3.}
Our third objective is to design a distributed algorithm that enables nodes to collaboratively find an approximate solution to the following optimization problem: 
\begin{subequations} 
\begin{align} 
\mathbf{P3:}~\min_{x \in \mathcal{X}}~ & F(x_1, x_2, ..., x_n) \equiv \sum_{i=1}^n f_i(x_i), \label{Global_cost_function_3}  \\ 
\text{s.t.}~ & \text{same as} \ \eqref{constr_same_x}, \\ 
       & \text{same as} \ \eqref{constr_x_in_X}, \\ 
       & b^{\text{FLC}}\left(s_i^{[k]}\right) \in \mathcal{B}^{3}, \label{constr_quant_p3} \\
       & \text{same as} \ \eqref{constr_stop} . 
\end{align} 
\end{subequations} 
Eq.~\eqref{constr_quant_p3} means that nodes are transmitting and receiving $N$-bit quantized values with their neighbors. 

\begin{remark}[Challenges of \textbf{P1, P2, P3}]
    Solving problems \textbf{P1}, \textbf{P2}, and \textbf{P3} poses significant challenges. 
    In \textbf{P1}, the challenge lies in the communication constraints that limit the transmission of nodes' exact states. 
    This requires the design of robust algorithms capable of operating under quantized communication among nodes. 
    Additionally, in \textbf{P2}, nodes must collaboratively monitor the improvement of their local cost functions to determine when to terminate the distributed optimization process. 
    This demands the design of additional coordination mechanisms.  
    Finally, in \textbf{P3}, the $N$-bit communication constraint necessitates the design of algorithms that can distributively adjust the quantizer parameters to maintain optimization accuracy. 
    Addressing the challenges of \textbf{P1, P2, P3} is crucial for addressing optimization problems in resource-constrained and imperfect communication environments.
\end{remark}

\subsection{Assumptions} 

For the development of our results in this paper we make the following assumptions. 

\begin{assum}\label{str_conn}
The network is modeled as a \textit{strongly connected} digraph $\mathcal{G}$. 
\end{assum}

Assumption~\ref{str_conn} is needed to ensure that information propagates among every node in the network (since there is a path from every node to every other node). 
This enables nodes to compute the optimal solution $x^*$ for our problems.

\begin{assum}\label{lipsch_str_conv}
    For every node $v_i$, the local cost function $f_i(x)$ is smooth and strongly convex. 
    This means that for every node $v_i$, for every $x_1, x_2 \in \mathcal{X}$, 
    \begin{itemize}
        \item there exists positive constant $L_i$ such that
        \begin{equation}\label{lipschitz_defn}
            \| \nabla f_i(x_1) - \nabla f_i(x_2) \|_2 \leq L_i \| x_1 - x_2 \|_2, 
        \end{equation}
        \item there exists positive constant $\mu_i$ such that 
        \begin{equation}\label{str_conv_defn}
             f_i(x_2) \geq f_i(x_1) + \nabla f_i(x_1)^\top (x_2 - x_1) + \frac{\mu_i}{2} \| x_2 - x_1 \|_2^2. 
        \end{equation}
    \end{itemize}
\end{assum}

As a consequence of Assumption~\ref{lipsch_str_conv}, the global cost function $F$ (see e.g., \eqref{Global_cost_function}) has Lipschitz-continuity and strong-convexity constants denoted by $L$ and $\mu$, respectively. 
These constants can be bounded as follows: $L \leq \sum_i L_i$, and $\mu \geq \min_i \mu_i$ (see, e.g., \cite[Theorem~15]{JMLR:v20:19-543}).
Assumption~\ref{lipsch_str_conv}, in \eqref{lipschitz_defn} ensures existence of a global optimal solution $x^*$ for \eqref{Global_cost_function}, and that nodes have the capability to compute it. 
This assumption is standard in first-order distributed optimization problems (see e.g., \cite{2018:Xie, 2018:Li_Quannan}). 
Additionally, strong-convexity in \eqref{str_conv_defn} (combined with smoothness) enables us to establish linear convergence rate and ensure that the global cost function $F$ in \eqref{Global_cost_function} possesses only one minimum \cite{bubeck2015convex}. 

\begin{assum}\label{diam_quant}
Every node $v_i \in \mathcal{V}$ knows the diameter $D$ of the underlying network (or an upper bound $D' \geq D$), and a common quantization level $\Delta$. 
\end{assum}
  
In Assumption~\ref{diam_quant}, knowledge of the network diameter $D$ enables nodes to determine in finite time whether their calculated optimal solutions are equal (i.e., condition \eqref{constr_same_x} holds), and a common quantization level $\Delta$ enables communication among nodes by exchanging quantized messages (i.e., condition \eqref{constr_quant} holds). 

\section{Distributed Optimization with Quantized Communication}\label{sec:distr_opt_quant}



\subsection{Optimization Algorithm with Quantized Communication}\label{optim_alg_IFAC} 

In this section, we present a distributed optimization algorithm (detailed below as Algorithm~\ref{algorithm_1}) which solves problem \textbf{P1} in Section~\ref{sec:probForm}. 

\noindent
\vspace{-0.5cm}    
\begin{varalgorithm}{1}
\caption{Quantized Averaged Gradient Descent}
\textbf{Input:} A strongly connected directed graph $\mathcal{G}$ with $n = |\mathcal{V}|$ nodes and $m = |\mathcal{E}|$ edges. 
For every node $v_i \in \mathcal{V}$: static step-size $\alpha \in \mathbb{R}$, digraph diameter $D$, initial state $x_i^{[0]}$, local cost function $f_i$, quantization level $\Delta \in \mathbb{Q}$. 
\\
\textbf{Iteration:} For $k=0,1,2,\dots$, each node $v_i \in \mathcal{V}$ does: 
\begin{list4}
    \item[$1)$] performs one gradient descent step $z_i^{[k+1]} =  x_i^{[k]} - \alpha \nabla f_i(x_i^{[k]})$;
    \item[$2)$] updates its local estimate variable (used to calculate the optimal solution) as $x_i^{[k+1]} =$ Algorithm~\ref{algorithm_1a}($q_{\Delta}^s(z_i^{[k+1]}) / \Delta, D, \Delta $); 
\end{list4} 
\textbf{Output:} Each node $v_j \in \mathcal{V}$ calculates $x^*$ which solves problem \textbf{P1} in Section~\ref{sec:probForm}. 
\label{algorithm_1}
\end{varalgorithm}

\noindent
\vspace{-0.3cm}    
\begin{varalgorithm}{2}
\caption{Finite-Time Quantized Average Consensus}
\textbf{Input:} $\rho_i = q_{\Delta}^s(z_i^{[k+1]}) / \Delta, D, \Delta$. 
\\
\textbf{Initialization:} Each node $v_i \in \mathcal{V}$ does the following: 
\begin{list4}
\item[$1)$] Assigns probability $b_{li}$ to each out-neighbor $v_l \in \mathcal{N}^+_i \cup \{v_i\}$, as follows
\begin{align*}
b_{li} = \left\{ \begin{array}{ll}
         \frac{1}{1 + \mathcal{D}_i^+}, & \mbox{if $l = i$ or $v_{l} \in \mathcal{N}_i^+$,} \\
         0, & \mbox{if $l \neq i$ and $v_{l} \notin \mathcal{N}_i^+$;}\end{array} \right. 
\end{align*} 
\item[$2)$] sets $z_i = 2$, $y_i = 2 \ \rho_i$; 
\end{list4} 
\textbf{Iteration:} For $\lambda = 1,2,\dots$, each node $v_i \in \mathcal{V}$, does: 
\begin{list4a}
\item[$1)$] \textbf{if} $\lambda \mod (D) = 1$ \textbf{then} $M_i = \lceil y_i  / z_i \rceil$, $m_i = \lfloor y_i / z_i \rfloor$; 
\item[$2)$] broadcasts its stopping variables $M_i$, $m_i \in \mathbb{N}$ (used to determine whether convergence has been achieved) to every $v_{l} \in \mathcal{N}_i^+$; receives $M_j$, $m_j$ from every $v_{j} \in \mathcal{N}_i^-$; sets $M_i = \max_{v_{j} \in \mathcal{N}_i^-\cup \{ v_i \}} M_j$, \\ $m_i = \min_{v_{j} \in \mathcal{N}_i^-\cup \{ v_i \}} m_j$; 
\item[$3)$] sets $c_i^z = z_i$; 
\item[$4)$] \textbf{while} $c_i^z > 1$ \textbf{do} 
\begin{list4a}
\item[$4.1)$] $c^y_i = \lfloor y_{i} \  / \  z_{i} \rfloor$; 
\item[$4.2)$] sets its mass variables (used to communicate with other nodes by either transmitting or receiving messages) as $y_{i} = y_{i} - c^y_i$, $z_{i} = z_{i} - 1$, and $c_i^z = c_i^z - 1$; 
\item[$4.3)$] transmits $c^y_i$ to randomly chosen out-neighbor $v_l \in \mathcal{N}^+_i \cup \{v_i\}$ according to $b_{li}$; 
\item[$4.4)$] receives $c^y_j$ from $v_j \in \mathcal{N}_i^-$ and sets 
\begin{align}
y_i & = y_i + \sum_{j=1}^{n} w^{[r]}_{\lambda,ij} \ c^y_{j} \ , \\
z_i & = z_i + \sum_{j=1}^{n} w^{[r]}_{\lambda,ij} \ ,
\end{align}
where $w^{[r]}_{\lambda,ij} = 1$ when node $v_i$ receives $c^y_{i}$, $1$ from $v_j$ at time step $\lambda$ (otherwise $w^{[r]}_{\lambda,ij} = 0$ and $v_i$ receives no message at time step $\lambda$ from $v_j$);
\end{list4a} 
\item[$5)$] \textbf{if} $\lambda \mod D = 0$ \textbf{and} $M_i - m_i \leq 1$ \textbf{then} sets $x_i^{[k+1]} = m_i \Delta$ and stops operation. 
\end{list4a}
\textbf{Output:} $x_i^{[k+1]}$. 
\label{algorithm_1a} 
\end{varalgorithm}

The intuition of Algorithm~\ref{algorithm_1} is the following. 
Initially, each node $v_i$ maintains an estimate $x_i^{[0]}$ of the optimal solution. 
At each time step $k$, each node updates the estimate of the optimal solution by performing a gradient descent step towards the negative direction the node's gradient; see Iteration Step~$1$. 
Then, each node utilizes a finite time quantized averaging algorithm named Algorithm~\ref{algorithm_1a}; see Iteration Step~$2$. 
Algorithm~\ref{algorithm_1a} converges in finite time and enables each node to update its estimate of the optimal solution to be equal to the quantized average of every node’s estimate. 
Note that during Algorithm~\ref{algorithm_1a} the utilized quantization level (i) is the same for every node, (ii) enables quantized communication between nodes, and (iii) determines the desired precision of the solution (as shown in Section~\ref{sec:results}). 
Algorithm~\ref{algorithm_1a} runs between every two consecutive optimization steps $k$ and $k + 1$ of Algorithm~\ref{algorithm_1}. 
For this reason it uses a different time index $\lambda$ (and not $k$ as Algorithm~\ref{algorithm_1}). 
The intuition of Algorithm~\ref{algorithm_1a} is explained below. 

Algorithm~\ref{algorithm_1a} utilizes (i) quantization, (ii) quantized averaging, and (iii) a stopping strategy. 
Specifically, each node $v_i$ utilizes a symmetric mid-rise quantizer with infinite range to quantize its state; see Input. 
During the iteration, at each time step $\lambda$, each node $v_i$ checks if $z_i > 1$. 
If this condition is met the node partitions $y_i$ into $z_i$ equal quantized segments, with some segments potentially containing a higher value than others by one. 
It selects a segment with the lowest $y$-value and sends it to itself, while the other $z_i-1$ segments are transmitted to randomly chosen out-neighbors or to itself. 
After receiving values $y$ and $z$ from its in-neighbors, it adds them to its stored $y_i$ and $z_i$ values and repeats the process. 
The idea is that $\sum_{i=1}^n z_i$ tokens (each having a value $y_i$) perform random walk in the network. 
Each time two or more tokens ``meet'' at a node, their $y$ values become equal. 
Ultimately, this process leads to all $y$ values becoming equal to the quantized average; see Iteration Step~$4$. 
Finally, every $D$ time steps nodes perform a max and min consensus operation; see Iteration Steps~$1, 2$. 
If the stopping condition holds, it scales the solution according to the quantization level and stops the operation; see Iteration Step~$5$. 

\begin{remark}[Advantages of Algorithm~\ref{algorithm_1}]
    The primary characteristic of Algorithm~\ref{algorithm_1} is the varying number of communication and local optimization steps performed by nodes in each iteration. 
    This is inspired from $DGD^t$ in \cite{2019:Wei_Berahas_optimization}. 
    In this algorithm nodes perform $t$ consensus steps for every local optimization step during every iteration. 
    This enables them to operate more efficiently in environments where communication is much cheaper than computation. 
    This approach marks a significant departure from standard distributed optimization algorithms \cite{2009:Nedic_Optim} that involve a single round of communication and local optimization per iteration.
    Algorithm~\ref{algorithm_1} aligns with the approaches \cite{2024:Khatana_Salapaka_optimization, 2022:Jiang_Charalambous, 2023:Bastianello_Rikos_Johansson}. 
    These algorithms involve each node executing a local optimization step followed by a finite-time consensus algorithm during every iteration.
    As we can see in \cite[Fig.~$2$]{2024:Khatana_Salapaka_optimization}, and \cite[Fig.~$1$]{2023:Bastianello_Rikos_Johansson} this approach significantly improves convergence speed in terms of optimization steps. 
    However, compared to \cite{2024:Khatana_Salapaka_optimization, 2022:Jiang_Charalambous, 2023:Bastianello_Rikos_Johansson}, during Algorithm~\ref{algorithm_1} each node employs a finite-time quantized consensus algorithm after the local optimization step. 
    This means that during each iteration the exchange of quantized messages among nodes further enhances communication efficiency while maintaining comparable convergence performance as we show in Section~\ref{sec:results}. 
\end{remark}

\subsection{Convergence of Algorithm~\ref{algorithm_1}}\label{alg_convergence}

We now proceed to analyze the convergence of Algorithm~\ref{algorithm_1}. 
Initially, we establish a bound on the differences between various parameters of Algorithm~\ref{algorithm_1} through a lemma. 
Subsequently, we demonstrate via a theorem that during the execution of Algorithm~\ref{algorithm_1} the state $x_i^{[k]}$ of each node $v_i \in \mathcal{V}$ converges at a linear rate to a neighborhood of the optimal solution $x^*$, thereby addressing problem \textbf{P1}. 

For convenience we rewrite the Iteration steps~$1$ and $2$ of Algorithm~\ref{algorithm_1} are: 
\begin{align} 
z_i^{[k+1]} =&  x_i^{[k]} - \alpha \nabla f_i(x_i^{[k]}), \label{x_medium_update} \\ 
 x_i^{[k+1]} =& \text{Algorithm}~\ref{algorithm_1a}(z_i^{[k+1]} , D, \Delta ).\label{x_final_update} 
\end{align} 
In~\eqref{x_final_update}, after convergence of Algorithm~\ref{algorithm_1a} each node's state is 
\begin{equation}\label{algorithm1a_ac} 
x_i^{[k+1]} = \frac{1}{n} \Bigl ( \sum_{i=1}^{n}\Delta \Bigl \lfloor \frac{z_i^{[k+1]} }{\Delta} \Bigr \rfloor \Bigr ) - \varrho_i^{[k+1]}, \ 0 \le \varrho_i^{[k+1]} \le \Delta, 
\end{equation} 
for every $k \ge 0$, where $\varrho_i^{[k+1]}$ represents the cumulative error incurred by Algorithm~\ref{algorithm_1a} while computing the quantized average of the nodes' initial quantized states at step $k$. 
From the asymmetric quantizers in Section~\ref{sec:preliminaries}, we have 
\begin{align}
  x_i^{[k]} =&  \Delta \Bigl \lfloor \frac{x_i^{[k]}}{\Delta} \Bigr \rfloor + \epsilon_i^{[k]}, \ \text{where} \ 0 \le \epsilon_i^{[k]} \le \Delta, \label{eq_epsilon}
\end{align}
for every $v_i \in \mathcal{V}$, where $\epsilon_i^{[k]}$ is the error due to applying asymmetric quantization to the value $x_i^{[k]}$ at time step $k$.
Let us now denote 
\begin{equation}\label{cumulative_x_z}
\hat{z}^{[k+1]}  \coloneqq \frac{1}{n}\sum_{i=1}^{n} z_i^{[k+1]}, \ \text{and} \ \hat{x}^{[k+1]} \coloneqq \frac{1}{n}\sum_{i=1}^{n} x_i^{[k+1]}, \ k\ge 0 .
\end{equation}
Combining \eqref{eq_epsilon} and \eqref{cumulative_x_z}, we have 
\begin{align}
 \hat{z}^{[k+1]} - x_i^{[k+1]} =& \frac{1}{n}\sum_{i=1}^{n} z_i^{[k+1]} - \frac{1}{n}\sum_{i=1}^{n}\Delta \Bigl \lfloor \frac{z_i^{[k+1]} }{\Delta} \Bigr \rfloor +\varrho_i^{[k+1]}\nonumber \\=& \frac{1}{n}\sum_{i=1}^{n}\epsilon_i^{[k+1]}+\varrho_i^{[k+1]} \le 2 \Delta.\label{x_hatz}
\end{align}

\begin{lemma}\label{lemma_uk_phik}
	For $ k \ge 1 $, the following inequalities hold:
	\begin{align}
	\| \hat x^{[k]}-\hat{z}^{[k]}  \|\le & 2\Delta, \label{uk_hatuk}\\
	\| x_i^{[k]}- \hat x^{[k]} \| \le& 4\Delta. \label{phik}
	\end{align}
\end{lemma}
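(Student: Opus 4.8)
The plan is to establish the two bounds by induction on $k$, exploiting the key identity \eqref{algorithm1a_ac} together with the quantization error bound \eqref{x_hatz}, which already gives $\|\hat z^{[k+1]} - x_i^{[k+1]}\| \le 2\Delta$ for all $i$. First I would observe that \eqref{x_hatz}, since it holds for every node $v_i$, immediately yields $\|\hat z^{[k]} - x_i^{[k]}\| \le 2\Delta$ for all $k \ge 1$; averaging this over $i$ gives $\|\hat z^{[k]} - \hat x^{[k]}\| \le 2\Delta$, which is exactly \eqref{uk_hatuk}. So the first inequality is essentially a restatement of \eqref{x_hatz} and needs only this averaging remark. (One should double-check the indexing: \eqref{x_hatz} is written for the step-$k{+}1$ quantities, so \eqref{uk_hatuk} for $k\ge 1$ follows by shifting the index.)

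For the second bound \eqref{phik}, the idea is to combine the triangle inequality $\|x_i^{[k]} - \hat x^{[k]}\| \le \|x_i^{[k]} - \hat z^{[k]}\| + \|\hat z^{[k]} - \hat x^{[k]}\|$. The second term is $\le 2\Delta$ by \eqref{uk_hatuk}. For the first term I would again invoke \eqref{x_hatz} in the form $\|x_i^{[k]} - \hat z^{[k]}\| \le 2\Delta$. Adding these gives the $4\Delta$ bound directly. Thus \eqref{phik} also follows without a genuine induction — both inequalities are consequences of the single estimate \eqref{x_hatz} plus convexity/averaging of the norm. I would present it cleanly: first derive the per-node bound $\|\hat z^{[k]} - x_i^{[k]}\| \le 2\Delta$ from \eqref{algorithm1a_ac}–\eqref{x_hatz}, then average to get \eqref{uk_hatuk}, then apply the triangle inequality for \eqref{phik}.

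The main subtlety — and the only place one could slip — is bookkeeping the quantization errors and the finite-time-consensus residual $\varrho_i^{[k]}$ consistently: \eqref{algorithm1a_ac} says $x_i^{[k+1]}$ equals the exact quantized average minus $\varrho_i^{[k+1]} \in [0,\Delta]$, and \eqref{eq_epsilon} says each $x_i^{[k]}$ itself differs from its own floor-quantization by $\epsilon_i^{[k]} \in [0,\Delta]$; the computation in \eqref{x_hatz} already bundles the $\tfrac1n\sum \epsilon_i^{[k+1]}$ term ($\le \Delta$) with $\varrho_i^{[k+1]}$ ($\le \Delta$) to reach $2\Delta$. I would make sure that the chain of inequalities uses $\|x_i^{[k]} - \hat z^{[k]}\|$ with the same constant and does not accidentally double-count, and that the base case $k=1$ is covered (Algorithm~\ref{algorithm_1a} has converged after the first outer iteration, so \eqref{algorithm1a_ac} is valid for $k+1 = 1$, i.e. $k \ge 0$, matching the hypothesis $k \ge 1$ in the lemma). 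No harder ingredient — such as contraction of the gradient step or strong convexity — is needed here; those enter only in the subsequent convergence theorem.
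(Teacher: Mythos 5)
Your proposal is correct and follows essentially the same route as the paper: the paper likewise obtains \eqref{uk_hatuk} by bounding the averaged residual $\tfrac{1}{n}\sum_i \varrho_i^{[k]}$ and the averaged floor-quantization error by $\Delta$ each (your version, averaging the per-node bound from \eqref{x_hatz}, is an equivalent repackaging), and it proves \eqref{phik} by exactly the triangle inequality you describe. No induction is used in the paper either, and your bookkeeping of the $\epsilon_i$ and $\varrho_i$ terms matches the paper's.
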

\begin{proof}
For proving~\eqref{uk_hatuk}, from~\eqref{algorithm1a_ac} we have 
\begin{align}
\| \hat x^{[k]}-\hat{z}^{[k]}  \| =& \|\frac{1}{n}\sum_{i=1}^{n} (x_i^{[k]}-z_i^{[k]}) \| \nonumber\\ =&\|\frac{1}{n}\sum_{i=1}^{n}( - \varrho_i^{[k]} - z_i^{[k]}+ \frac{1}{n}\sum_{i=1}^{n} \Delta \Bigl \lfloor \frac{z_i^{[k]} }{\Delta} \Bigr \rfloor)  \|\nonumber\\
\le&\|\frac{1}{n}\sum_{i=1}^{n} \varrho_i^{[k]}  \| + \|\frac{1}{n}\sum_{i=1}^{n}(  \Delta \Bigl \lfloor \frac{z_i^{[k]} }{\Delta} \Bigr \rfloor -z_i^{[k]})  \| \nonumber\\ \le& 2\Delta .
\end{align}

For proving~\eqref{phik}, from~\eqref{x_hatz} and \eqref{uk_hatuk}, we have 
\begin{equation*}
\| x_i^{[k]}- \hat x^{[k]} \| \le \| x_i^{[k]}- \hat{z}^{[k]} \| + \| \hat{z}^{[k]}- \hat x^{[k]} \| \le 4\Delta .
\end{equation*}
\end{proof}
Let us denote 
$$ 
u^{[k]} := \sum_{i=1}^{n} \nabla f_i(x_i^{[k]}), \ \hat u^{[k]} := \sum_{i=1}^{n} \nabla f_i(\hat x^{[k]}). 
$$
Combining Assumption~\ref{lipsch_str_conv} with~\eqref{phik}, we have 
\begin{align} 
&\| \nabla f_i(x_i^{[k]}) - \nabla f_i(\hat x^{[k]}) \| \le  L_i \| x_i^{[k]} -\hat  x^{[k]}\|\le  L_i 4 \Delta \\
&\|u^{[k]} - \hat u^{[k]} \| \le  \sum_{i=1}^{n}\| \nabla f_i(x_i^{[k]})-\nabla f_i(\hat{x}^{[k]}) \| \nonumber \\ &\le  \sum_{i=1}^{n} L_i \| x_i^{[k]} -\hat  x^{[k]}\| \le L 4 \Delta, \ \text{where} \ L = \sum_i L_i. \label{uk_hat}
\end{align} 
Let us now denote $ e^{[k]} = \hat{x}^{[k]}-\hat{z}^{[k]}$, and $\hat \alpha := \alpha / n$.
From~\eqref{uk_hatuk} we have $ \|e^{[k]}\|\le 2\Delta $. 
From \eqref{x_medium_update} we get
\begin{align}
\hat{x}^{[k+1]} =& \hat{z}^{[k+1]} + e^{[k+1]}\nonumber\\
=& \frac{1}{n}\sum_{i=1}^{n} (x_i^{[k]} - \alpha \nabla f_i(x_i^{[k]})) + e^{[k+1]}\nonumber\\
=&\hat{x}^{[k]} - \hat \alpha u^{[k]} + e^{[k+1]},  \ \text{for} \ k\ge 1. \label{hatx_calcu2}
\end{align}
We now establish our algorithm's convergence rate, determine an upper bound for the step-size via the following theorem. 

\begin{theorem}\label{theorem_convergence_stronglyConvex} 
    Under Assumptions~\ref{str_conn}--\ref{diam_quant}, when the step-size $ \alpha $ satisfies $ \alpha \in (0, \frac{2n}{\mu + L}] $, where  
    $L =  \sum_i L_i$, and $\mu = \min_i \mu_i$, 
    Algorithm~\ref{algorithm_1} generates a sequence of points $ \{x^{[k]}\}$ (i.e., the variable $x_i^{[k]}$ of each node $v_i \in \mathcal{V}$) which satisfy
	\begin{align}\label{linear_convergence}
	\| \hat{x}^{[k+1]} - x^{*}\|
	\le
	(1-\frac{\alpha \mu}{n})\| \hat{x}^{[k]} - x^{*}  \|  + \mathcal{O}(\Delta) ,
	\end{align}
	where $\Delta$ is the quantizer level and
		\begin{align}
		 \mathcal{O}(\Delta) 
		 =&( \frac{4\alpha L}{n} + 2)\Delta. \label{throrem1_b} 
		\end{align}
\end{theorem}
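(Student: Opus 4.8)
The plan is to establish the standard one-step contraction for gradient descent on a strongly convex, smooth function, applied to the \emph{averaged} iterate $\hat{x}^{[k]}$, and then absorb the quantization-induced perturbations into the $\mathcal{O}(\Delta)$ term. The starting point is the exact recursion \eqref{hatx_calcu2}, namely $\hat{x}^{[k+1]} = \hat{x}^{[k]} - \hat\alpha u^{[k]} + e^{[k+1]}$ with $\hat\alpha = \alpha/n$ and $\|e^{[k+1]}\| \le 2\Delta$ from \eqref{uk_hatuk}. The idea is to rewrite this in terms of the gradient of the global cost $F$ evaluated at $\hat x^{[k]}$, i.e.\ using $\hat u^{[k]} = \sum_i \nabla f_i(\hat x^{[k]}) = \nabla F(\hat x^{[k]})$ (note $F$ here is evaluated at a common point, so this is the true global gradient). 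Writing $u^{[k]} = \hat u^{[k]} + (u^{[k]} - \hat u^{[k]})$ and using the bound $\|u^{[k]} - \hat u^{[k]}\| \le 4L\Delta$ from \eqref{uk_hat}, we get
\begin{align*}
\hat{x}^{[k+1]} - x^* = \bigl(\hat x^{[k]} - \hat\alpha \nabla F(\hat x^{[k]}) - x^*\bigr) - \hat\alpha\,(u^{[k]} - \hat u^{[k]}) + e^{[k+1]}.
\end{align*}

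Next I would invoke the classical fact (from the co-coercivity / strong convexity toolbox, e.g.\ \cite{bubeck2015convex}) that for an $L$-smooth, $\mu$-strongly convex function with $\nabla F(x^*) = 0$, the gradient map is a contraction for small step size: specifically $\|x - \eta\nabla F(x) - x^*\| \le (1 - \eta\mu)\|x - x^*\|$ whenever $\eta \le \tfrac{2}{\mu + L}$. Here the effective step size on the averaged iterate is $\hat\alpha = \alpha/n$, so the condition $\hat\alpha \le \tfrac{2}{\mu+L}$ translates exactly into $\alpha \le \tfrac{2n}{\mu+L}$, which is the hypothesis of the theorem; and $1 - \hat\alpha\mu = 1 - \tfrac{\alpha\mu}{n}$ is the claimed contraction factor. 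Applying the triangle inequality to the displayed identity then yields
\begin{align*}
\|\hat{x}^{[k+1]} - x^*\| \le \Bigl(1 - \tfrac{\alpha\mu}{n}\Bigr)\|\hat x^{[k]} - x^*\| + \hat\alpha\,\|u^{[k]} - \hat u^{[k]}\| + \|e^{[k+1]}\| \le \Bigl(1 - \tfrac{\alpha\mu}{n}\Bigr)\|\hat x^{[k]} - x^*\| + \tfrac{\alpha}{n}\cdot 4L\Delta + 2\Delta,
\end{align*}
and collecting the last two terms gives $\mathcal{O}(\Delta) = \bigl(\tfrac{4\alpha L}{n} + 2\bigr)\Delta$, matching \eqref{throrem1_b}.

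The main obstacle, and the step deserving the most care, is justifying the contraction inequality $\|x - \eta\nabla F(x) - x^*\| \le (1-\eta\mu)\|x-x^*\|$ in the precise form needed — it relies on the bound $\langle \nabla F(x) - \nabla F(x^*),\, x - x^*\rangle \ge \tfrac{\mu L}{\mu+L}\|x-x^*\|^2 + \tfrac{1}{\mu+L}\|\nabla F(x) - \nabla F(x^*)\|^2$ for $L$-smooth $\mu$-strongly convex functions, after which expanding $\|x - \eta\nabla F(x) - x^*\|^2$ and using $\eta \le \tfrac{2}{\mu+L}$ to control the cross and quadratic terms gives the result. One should also double-check the constants $L = \sum_i L_i$ and $\mu = \min_i \mu_i$ are the ones that legitimately govern $F$ (this is exactly what the remark after Assumption~\ref{lipsch_str_conv} records, citing \cite[Theorem~15]{JMLR:v20:19-543}), and that \eqref{hatx_calcu2} holds for all $k \ge 1$ so the recursion can be iterated. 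Everything else — the triangle-inequality bookkeeping and the substitution of the already-established bounds from Lemma~\ref{lemma_uk_phik} and \eqref{uk_hat} — is routine.
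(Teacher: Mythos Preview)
Your proposal is correct and mirrors the paper's proof essentially step for step: the paper also starts from \eqref{hatx_calcu2}, splits $\hat{x}^{[k+1]}-x^*$ into $(\hat{x}^{[k]}-x^*-\hat\alpha\hat u^{[k]}) + \hat\alpha(\hat u^{[k]}-u^{[k]}) + e^{[k+1]}$, invokes the same contraction inequality (stated as a separate lemma and justified via the same co-coercivity bound from Nesterov that you identify), and then applies the triangle inequality with the bounds \eqref{uk_hat} and $\|e^{[k+1]}\|\le 2\Delta$. The only cosmetic difference is that the paper also unrolls the recursion to obtain the geometric-sum bound $\|\hat x^{[k+1]}-x^*\|\le \vartheta^{k+1}\|\hat x^{[0]}-x^*\| + \tfrac{1-\vartheta^{k+1}}{1-\vartheta}\mathcal{O}(\Delta)$, but that is immediate from the one-step inequality you derive.
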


\begin{proof}
See Appendix~\ref{appen_theorem}. 
\end{proof} 

Theorem~\ref{theorem_convergence_stronglyConvex} shows that Algorithm~\ref{algorithm_1} converges linearly to a neighborhood of the optimal solution. 
This neighborhood is determined by the quantization level $\Delta$ as shown in Section~\ref{sec:results}. 

%
%
%
%
\section{Solution Enhancement and Stopping via Event-Triggered Zooming}\label{sec:distr_grad_zoom_quant}

Inspired by strategies for terminating the gradient methods according to performance criteria \cite[Section~$1.2$]{2016:nonlinear_program_Bertsekas} and the need for achieving high precision solutions, in this section we propose a novel distributed algorithm (named below as Algorithm~\ref{algorithm_2}). 
Our algorithm aims to approach the optimal solution as closely as possible within the limitations of quantized communication, while incorporating a distributed stopping criterion that allows nodes to collectively determine when to terminate operations based on specific performance metrics. 

\subsection{Optimization Algorithm with Event-Triggered Zooming over Quantized Communication}\label{optim_alg_IFAC} 

We now present a distributed algorithm which solves problem \textbf{P2} in Section~\ref{sec:probForm}. 

\begin{varalgorithm}{3}
\caption{Quantized Averaged Gradient Descent with Zoomed Communication}
\textbf{Input:} A strongly connected directed graph $\mathcal{G}$ with $n = |\mathcal{V}|$ nodes and $m = |\mathcal{E}|$ edges. 
Static step-size $\alpha \in \mathbb{R}$, digraph diameter $D$, initial value $x_i^{[0]}$, local cost function $f_i$, error bounds $\varepsilon_{s_1}$ and $\varepsilon_{s_2}$, quantization level $\Delta \in \mathbb{Q}$, refinement constant $c_r \in \mathbb{N}$, for every node $v_i \in \mathcal{V}$. 
\\ 
\textbf{Initialization:} Each node $v_i \in \mathcal{V}$ sets $\nu_{\text{total}} = 0$, $\beta = \nu_{\text{total}}$, $S_i = \{ 0 \}$. \\ 
\textbf{Iteration:} For $k = 0,1,2,\dots$, each node $v_i \in \mathcal{V}$ does the following: 
\begin{list4} 
\item[$1)$] performs one gradient descent step $z_i^{[k+1]} =  x_i^{[k]} - \alpha \nabla f_i(x_i^{[k]})$; 
\item[$2)$] updates its local estimate variable (used to calculate the optimal solution) as $x_i^{[k+1]} =$ Algorithm~\ref{algorithm_1a}($q_{\Delta}^s(z_i^{[k+1]}) / \Delta, D, \Delta$); 
\item[3)] \textbf{if} $x_i^{[k+1]} = x_i^{[k]}$, \textbf{then} 
\begin{list4a} 
\item[$3a)$] set $\nu_{\text{total}} = \nu_{\text{total}} + 1$, $\beta = \nu_{\text{total}}$, and $\gamma_{\beta} = k$ (where $\nu_{\text{total}}$ is an indicator of the number of times the algorithm has converged to a neighborhood, and $\gamma_\beta$ is the time step at which nodes have converged to a neighborhood of the optimal solution); 
\item[$3b)$] set $S_i = S_i \cup \{ \gamma_{\beta} \}$ (where $S_i$ is the set in which all the time steps at which nodes have converged to a neighborhood of the optimal solution are stored); 
\item[$3c)$] \textbf{if} $\| f_i(x_i^{[\gamma_{\beta-1}]}) - f_i(x_i^{[\gamma_\beta]}) \| \leq \varepsilon_{s_1}$, \textbf{and} $\| \nabla f_i(x_i^{[\gamma_\beta]}) \| \leq \varepsilon_{s_2}$ \textbf{then} set $\text{vot}_i = 0$; 
\item \textbf{else} set $\text{vot}_i = 1$; 
\item[$3d)$] $\text{flag}_i$ = max - Consensus ($\text{vot}_i$); 
\item[$3e)$] \textbf{if} $\text{flag}_i = 0$ \textbf{then} terminate operation (where $\text{flag}_i$ is used to decide whether to terminate the optimization algorithm operation); 
\item \textbf{else} set $\Delta = \Delta / c_r$ and go to Step~$1$;
\end{list4a} 
\end{list4} 
\textbf{Output:} Each node $v_i \in \mathcal{V}$ calculates $x_i^*$ which solves problem \textbf{P2} in Section~\ref{sec:probForm}. 
\label{algorithm_2} 
\end{varalgorithm} 

The intuition of Algorithm~\ref{algorithm_2} is the following. 
During initialization, each node $v_i$ maintains an estimate $x_i^{[0]}$ of the optimal solution, the quantization level $\Delta$, and the refinement constant $c_r$ (utilized for refining the quantization level). 
Let us recall that the quantization level $\Delta$ remains consistent across all nodes, and determines the desired precision level of the solution. 
Additionally, each node stores in a set $S_i$ the optimization time steps when nodes have converged to a neighborhood of the optimal solution. 
At each time step $k$, each node updates the estimate of the optimal solution through a gradient descent step, and then employs the finite-time quantized averaging algorithm Algorithm~\ref{algorithm_1a}; see Iteration Steps~$1$, $2$. 
This enables every node to satisfy \eqref{constr_quant} and compute an estimate of the optimal solution $x_i^{[k+1]}$ within finite time, also satisfying \eqref{constr_same_x}. 
Next, each node $v_i$ checks whether the computed estimate of the optimal solution $x_i^{[k+1]}$ is equal to the previous optimization step $x_i^{[k]}$; see Iteration Step~$3$.
This criterion indicates that nodes reached the optimization convergence point for the current quantization level (i.e., a neighborhood that depends on the current quantization level). 
In this case, each node $v_i$ stores the corresponding time step $k$ at the set $S_i$; see Iteration Step~$3b$. 
Then, it checks (i) if the difference between its local function's value at the current optimization convergence point $f_i(x_i^{[\gamma_{\beta}]})$ and that at the previous optimization convergence point $f_i(x_i^{[\gamma_{\beta-1}]})$ falls below a designated threshold $\varepsilon_{s_1}$, and (ii) whether the norm of the gradient at the current optimization convergence point is below a specified threshold $\varepsilon_{s_2}$; see Iteration Step~$3c$. 
If both conditions hold (the difference and the norm), then it sets its voting variable equal to $0$, otherwise it sets it to $1$; see Iteration Step~$3c$. 
Then nodes are performing a max-Consensus protocol to decide whether they will continue the operation of Algorithm~\ref{algorithm_2}; outlined in Iteration Step~$3d$. 
Specifically, if both conditions hold for every node, then nodes decide to stop the operation of Algorithm~\ref{algorithm_2}; see Iteration Step~$3e$. 
Otherwise, nodes refine the quantization level with the refinement constant $c_r$, and repeat the operation of the algorithm; see Iteration Step~$3e$. 

\begin{remark}[Operation of Algorithm~\ref{algorithm_2}]
    Algorithm~\ref{algorithm_2} incorporates a mechanism for distributed enhancement of optimal solution precision. 
    This mechanism adapts to desired accuracy levels while maintaining quantized communication. 
    Furthermore, it enables nodes to collectively determine when to terminate operations based on specific performance criteria. 
    In other distributed optimization algorithms, convergence either to a neighborhood or to the exact solution depends on specific tuning of the stepsize. 
    In our case, Algorithm~\ref{algorithm_2} incorporates a distributed mechanism to implement the stopping criterion of \cite[Section~$1.2$]{2016:nonlinear_program_Bertsekas} along with a second stopping criterion that depends on the overall improvement of the local cost functions. 
    In the literature there are also other strategies designed to terminate the operation of gradient descent in a centralized manner \cite[Section~$3.5$]{nocedal1999numerical} and \cite[Section~$3.3.1$]{boyd2011distributed}. 
    However, to the authors' knowledge the distributed termination of optimization algorithms based on performance criteria remains largely unexplored, with only \cite{ayken2015diffusion} presenting a related diffusion-based stopping criterion. 
    As a result, Algorithm~\ref{algorithm_2} not only enhances precision and efficiency but also addresses a significant gap in the literature by providing a robust distributed stopping mechanism based on performance criteria.
\end{remark}

\subsection{Convergence of Algorithm~\ref{algorithm_2}}\label{alg_convergence_2} 

We now analyze the convergence of Algorithm~\ref{algorithm_2} via the following theorem. 
Its intuition shares similarities with Theorem~\ref{theorem_convergence_stronglyConvex}. 
However, we present a sketch of the proof for completeness. 

\begin{theorem}\label{theorem_convergence_stronglyConvex_2} 
    Under Assumptions~\ref{str_conn}--\ref{diam_quant}, when the step-size $ \alpha $ satisfies $ \alpha \in (0, \frac{2n}{\mu + L}] $ where  
    $L =  \sum_i L_i$, and $\mu = \min_i \mu_i$, 
    Algorithm~\ref{algorithm_2} generates a sequence of points $ \{x^{[k]}\}$ (i.e., the variable $x_i^{[k]}$ of each node $v_i \in \mathcal{V}$) which satisfy
	\begin{align}\label{linear_convergence_2}
	\| \hat{x}^{[k+1]} - x^{*}\|
	\le
	(1-\frac{\alpha \mu}{n})\| \hat{x}^{[k]} - x^{*} \|   + \mathcal{O}(\Delta) ,
	\end{align}
	where $\Delta$ is the quantizer level and
		\begin{align}
		 \mathcal{O}(\Delta) 
		 =&( \frac{4\alpha L}{n} + 2)\Delta. \label{throrem1_b_2} 
		\end{align}
\end{theorem}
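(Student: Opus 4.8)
\textbf{Proof proposal for Theorem~\ref{theorem_convergence_stronglyConvex_2}.}
The plan is to follow exactly the route used for Theorem~\ref{theorem_convergence_stronglyConvex}, since the per-iteration dynamics of Algorithm~\ref{algorithm_2} are identical to those of Algorithm~\ref{algorithm_1} between two consecutive refinement events; the only new ingredient is the event-triggered rescaling $\Delta \mapsto \Delta/c_r$, which does not affect the recursion within a fixed-$\Delta$ phase and only \emph{improves} the constant $\mathcal{O}(\Delta)$ from one phase to the next. First I would recall that Lemma~\ref{lemma_uk_phik} and the derived bounds \eqref{x_hatz}--\eqref{uk_hat} hold verbatim here, because Iteration Steps~$1$ and $2$ of Algorithm~\ref{algorithm_2} coincide with \eqref{x_medium_update}--\eqref{x_final_update}, and Algorithm~\ref{algorithm_1a} is invoked with the same arguments. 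In particular, within any phase where $\Delta$ is constant we again have $\|e^{[k]}\| = \|\hat x^{[k]} - \hat z^{[k]}\| \le 2\Delta$ and the identity \eqref{hatx_calcu2}, i.e., $\hat x^{[k+1]} = \hat x^{[k]} - \hat\alpha u^{[k]} + e^{[k+1]}$ with $\hat\alpha = \alpha/n$.

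Next I would carry out the standard strongly-convex / $L$-smooth contraction estimate. Writing $\hat x^{[k+1]} - x^* = (\hat x^{[k]} - \hat\alpha \hat u^{[k]} - x^*) + \hat\alpha(\hat u^{[k]} - u^{[k]}) + e^{[k+1]}$, I would bound the first bracket by $(1 - \frac{\alpha\mu}{n})\|\hat x^{[k]} - x^*\|$ using the co-coercivity inequality for the sum $\frac1n\sum_i \nabla f_i$ evaluated at $\hat x^{[k]}$ together with the condition $\alpha \in (0, \frac{2n}{\mu+L}]$, which is precisely the classical gradient-descent contraction factor for the averaged objective. The second term is controlled by \eqref{uk_hat}, giving $\hat\alpha \|\hat u^{[k]} - u^{[k]}\| \le \frac{\alpha}{n}\cdot 4L\Delta = \frac{4\alpha L}{n}\Delta$, and the third by $\|e^{[k+1]}\| \le 2\Delta$. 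Summing the two residual contributions yields the additive term $\big(\frac{4\alpha L}{n} + 2\big)\Delta = \mathcal{O}(\Delta)$, establishing \eqref{linear_convergence_2}--\eqref{throrem1_b_2}.

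Finally I would add a short paragraph explaining why the refinement step is consistent with the stated bound: at the $\beta$-th convergence point $\gamma_\beta$ the algorithm either terminates (Step~$3e$) or replaces $\Delta$ by $\Delta/c_r$ and restarts Step~$1$ from the current iterate. Since the recursion \eqref{linear_convergence_2} holds for whatever value of $\Delta$ is currently active, and $c_r \in \mathbb{N}$ with $c_r \ge 1$ (in practice $c_r > 1$), each new phase satisfies the same inequality with a strictly smaller additive constant; hence \eqref{linear_convergence_2}--\eqref{throrem1_b_2} holds throughout the run with $\Delta$ interpreted as the currently active quantization level. I expect the only mild subtlety — the ``main obstacle,'' though it is minor — to be the book-keeping at a refinement instant: one must check that Algorithm~\ref{algorithm_1a} is re-initialized correctly with the new $\Delta$ so that the error bound $0 \le \varrho_i^{[k+1]} \le \Delta$ in \eqref{algorithm1a_ac} (and hence \eqref{x_hatz}) is re-established with the updated level, and that the contraction factor $1 - \frac{\alpha\mu}{n}$ is unaffected since it depends only on $\alpha, \mu, n$. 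Because the paper says it will only give a sketch, I would state these points briefly and refer to the proof of Theorem~\ref{theorem_convergence_stronglyConvex} in Appendix~\ref{appen_theorem} for the detailed computation.
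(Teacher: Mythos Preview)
Your proposal is correct and follows essentially the same approach as the paper: the paper's proof is a brief sketch observing that between refinement events Algorithm~\ref{algorithm_2} runs identically to Algorithm~\ref{algorithm_1}, so Theorem~\ref{theorem_convergence_stronglyConvex} applies on each fixed-$\Delta$ phase, and that at a convergence point the process either terminates or restarts with a refined $\Delta$. If anything, your write-up is more explicit than the paper's sketch in re-deriving the contraction estimate and flagging the re-initialization of Algorithm~\ref{algorithm_1a} at refinement instants.
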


\begin{proof}
Let us suppose that during the operation of Algorithm~\ref{algorithm_2}, for time steps $[0, k_1)$ the condition in Iteration-Step~$3$ does not hold (i.e., $x_i^{[k+1]} < x_i^{[k]}, \forall k \in [0, k_1)$). 
In this case, for for time steps $[0, k_1)$, Algorithm~\ref{algorithm_2} exhibits linear convergence rate towards the optimal solution of problem \textbf{P2} (as described in Theorem~\ref{theorem_convergence_stronglyConvex}). 
Let us suppose that during time step $k_1$ the condition in Iteration-Step~$3$ holds (i.e., $k_1$ is a convergence point).
In this case, nodes check (i) if the improvement of their local cost functions at this convergence point compared to the previous convergence point is below a threshold, and (ii) if the norm of their gradient is below a threshold. 
If both conditions hold for every node, then nodes decide to stop their operation. 
If at least one of these two conditions do not hold then, the quantization level is refined. 
Then, for a set of subsequent time steps, Algorithm~\ref{algorithm_2} continues its operation converging linearly towards the optimal solution. 
As a consequence, the operation of Algorithm~\ref{algorithm_2} enables nodes to converge to the optimal solution linearly (as described in Theorem~\ref{theorem_convergence_stronglyConvex}) with the difference being that at some time steps the optimization process stops and restarts with a refined quantization level $\Delta$. 
\end{proof} 

\begin{remark}[Extending Algorithm~\ref{algorithm_2}]
In Algorithm~\ref{algorithm_2}, the stopping criterion is based on the overall improvement of the local cost functions and the norm of the gradient of each node's local cost function at the convergence point. 
However, these conditions can be extended to accommodate more complex scenarios. 
For example, nodes can also evaluate the difference in the norms of the local cost functions at two consecutive convergence points or consider the normalized norms of the gradients \cite[Section~1.2]{2016:nonlinear_program_Bertsekas}. 
As a result, Algorithm~\ref{algorithm_2} provides a flexible framework for implementing distributed stopping and refinement strategies according to the specific conditions dictated by the corresponding application.
\end{remark}

\section{Solution Enhancement and Stopping with $N$-bit Quantizers}\label{sec:distr_grad_zoom_quant_3_bit} 

In the previous section, Algorithm~\ref{algorithm_2} exhibits quantized communication among nodes enhancing operational efficiency. 
However, as the quantization level is refined, the number of bits the nodes are required to exchange increases (since nodes utilize a symmetric mid-rise quantizer with infinite range). 
This, in turn, reduces the communication efficiency benefits provided by quantization.
In realistic scenarios (e.g., wireless sensor networks or Internet of Things devices \cite{2022:Prakash_Miao_IoT, 2022:Jinjin_Cheng_FedLearning}) communication among nodes is often constrained to only a few bits due to bandwidth limitations. 
In this section we present an efficient distributed algorithm that can operate effectively under such constraints.  

\subsection{Distributed Optimization Algorithm with $N$ Bit Adaptive Quantization}\label{distr_alg}

We now present our proposed algorithm detailed below as Algorithm~\ref{algor_4}. 

\begin{varalgorithm}{4}
\caption{Quantized Averaged Gradient Descent with Finite Bit Adaptive Quantization} 
\textbf{Input:} A strongly connected directed graph $\mathcal{G}$ with $n = |\mathcal{V}|$ nodes and $m = |\mathcal{E}|$ edges. 
Every node $v_i \in \mathcal{V}$ has: static step-size $\alpha \in \mathbb{R}$, digraph diameter $D$, initial value $x_i^{[0]}$, local cost function $f_i$, quantization level $\Delta^{[0]} \in \mathbb{Q}$, zoom-in constant $c^{\text{in}} \in \mathbb{Q}_{> 1}$, zoom-out constant $c^{\text{out}} \in \mathbb{Q}_{> 1}$. 
\\
\textbf{Initialization:} Each node $v_i \in \mathcal{V}$ sets $b_q = 0$, $\nu_{\text{in}} = 0$, $\nu_{\text{out}} = 0$, $\nu_{\text{total}} = 0$. \\ 
\textbf{Iteration:} For $k = 0,1,2, \dots$, each node $v_i \in \mathcal{V}$ does the following: 
\begin{list4}
\item[$1)$] performs one gradient descent step $z_i^{[k+1]} =  x_i^{[k]} - \alpha \nabla f_i(x_i^{[k]})$; 
\item[$2)$] updates its local estimate variable (used to calculate the optimal solution) as $x_i^{[k+1]} =$ Algorithm~\ref{algorithm_1a}($Q^{\text{$N$MRU}}(b_q, z_i^{[k+1]}, \Delta^{[\nu_{\text{total}}]}) / \Delta^{[\nu_{\text{total}}]}$, $D, \Delta^{[\nu_{\text{total}}]}$); 
\item[3)] \textbf{if} $x_i^{[k+1]} = x_i^{[k]}$, \textbf{then} 
\begin{list4a} 
\item[$3a)$] set $\nu_{\text{total}} = \nu_{\text{total}} + 1$, $\beta = \nu_{\text{total}}$, $\gamma_{\beta} = k$ (where $\nu_{\text{total}}$ is an indicator of the number of times the algorithm has converged to a neighborhood, and $\gamma_\beta$ is the time step at which nodes have converged to a neighborhood of the optimal solution); 
\item[$3b)$] set $S_i = S_i \cup \{ \gamma_{\beta} \}$ (where $S_i$ is the set in which all the time steps at which nodes have converged to a neighborhood of the optimal solution are stored); 
\item[$3c)$] \textbf{if} $\| f_i(x_i^{[\gamma_{\beta-1}]}) - f_i(x_i^{[\gamma_\beta]}) \| \leq \varepsilon_{s_1}$, \textbf{and} $\| \nabla f_i(x_i^{[\gamma_\beta]}) \| \leq \varepsilon_{s_2}$ \textbf{then} set $\text{vot}_i = 0$; 
\item \textbf{else} set $\text{vot}_i = 1$; 
\item[$3d)$] $\text{flag}_i$ = max - Consensus ($\text{vot}_i$); 
\item[$3e)$] \textbf{if} $\text{flag}_i = 0$ \textbf{then} terminate operation (where $\text{flag}_i$ is used to decide whether to terminate the optimization algorithm operation); 
\item \textbf{else if} $\text{flag}_i = 1$ \textbf{then}  
\begin{list4a}
\item[$3e.1)$] \textbf{if} $x_i^{[k+1]} \geq b_q + N \Delta^{[\nu_{\text{total}}]}$, \textbf{or} $x_i^{[k+1]} < b_q - N \Delta^{[\nu_{\text{total}}]}$ \textbf{then} set 
\begin{subequations} 
\begin{align} 
    & \nu_{\text{out}} = \nu_{\text{out}} + 1, \label{zoom_out_0} \\ 
    & b_q = x_i^{[k+1]},\label{zoom_out_1} \\
    & \Delta^{[\nu_{\text{total}}]} := c^{\text{out}} \Delta^{[\nu_{\text{total}}-1]},\label{zoom_out_2} 
\end{align} 
\end{subequations} 
(where $\nu_{\text{out}}$ is an indicator for the number of Zoom-out for the utilized quantizer), 
\item[$3e.2)$] \textbf{else} set 
\begin{subequations} 
\begin{align} 
    & \nu_{\text{in}} = \nu_{\text{in}} + 1, \label{zoom_in_0} \\ 
    & b_q = x_i^{[k+1]},\label{zoom_in_1} \\
    & \Delta^{[\nu_{\text{total}}]} := \Delta^{[\nu_{\text{total}}-1]} / c^{\text{in}},\label{zoom_in_2} 
\end{align} 
\end{subequations} 
(where $\nu_{\text{in}}$ is an indicator for the number of Zoom-in for the utilized quantizer), 
\end{list4a} 
\end{list4a} 
\item[$4)$] go to Step~$1$;
\end{list4} 
\textbf{Output:} Each node $v_i \in \mathcal{V}$ calculates the optimal solution $x_i^*$ of problem \textbf{P3} in Section~\ref{sec:probForm}. 
\label{algor_4} 
\end{varalgorithm}

The intuition of Algorithm~\ref{algor_4} can be summarized in the following three parts. 
\\ \noindent 
\textit{Part~$1$: Gradient Descent and Coordination.} 
During each time step $k$, each node $v_i$ updates its estimate of the optimal solution by performing a gradient descent step towards the negative direction of the gradient of the node's local cost function; see Iteration Step~$1$. 
Subsequently, each node employs Algorithm~\ref{algorithm_1a}; see Iteration Step~$2$. 
Algorithm~\ref{algorithm_1a} allows nodes to meet the constraint \eqref{constr_quant} and to compute within finite time an estimate of the optimal solution that satisfies \eqref{constr_same_x}.
\\ \noindent 
\textit{Part~$2$: Distributed Stopping.} 
Following the execution of Algorithm~\ref{algorithm_1a}, nodes check if $x_i^{[k+1]} = x_i^{[k]}$ (this condition indicates convergence to a neighborhood of the optimal solution defined by the current quantization level). 
When this condition is met, each node $v_i$ stores the corresponding time step $k$ in the set $S_i$; see Iteration Step~$3b$. 
Subsequently, each node evaluates two conditions: whether the difference between its local function's values at the current and previous optimization convergence points ($f_i(x_i^{[\gamma_{\beta}]})$ and $f_i(x_i^{[\gamma_{\beta-1}]})$) is below a threshold $\varepsilon_{s_1}$, and if the gradient norm at the current point is below a threshold $\varepsilon_{s_2}$; see Iteration Step~$3c$. 
Nodes set their voting variable to $0$ if both conditions are satisfied, or $1$ otherwise. 
A max-Consensus protocol is then performed to determine whether to continue Algorithm~\ref{algor_4}; see Iteration Step~$3d$. 
If all nodes satisfy both conditions, the algorithm terminates (see Iteration Step~$3e$); otherwise, nodes proceed to adjust their quantizer parameters (explained below). 
\\ \noindent
\textit{Part~$3$: Iteration -- Zoom-out or Zoom-in.} 
When nodes determine that they have converged to a neighborhood of the optimal solution but opt to continue the operation of Algorithm~\ref{algor_4}, they proceed to modify the parameters of their quantizer through one of the two following procedures. 
\\ \noindent 
\textit{Part~$3$-A: Iteration -- Zoom-out.} 
If the estimation exceeds the quantizer's dynamic range (i.e., the quantizer has saturated because $x_i^{[k+1]}$ falls within $(-\infty, b_q - N\Delta^{[\nu_{\text{total}}]}) \cup [b_q + N \Delta^{[\nu_{\text{total}}]}, +\infty)$), then the optimal solution lies outside the current quantizer range. 
In this scenario, nodes are unable to estimate the optimal solution with the existing quantizer parameters and must perform a zoom-out operation. 
Zoom-out involves setting the quantizer basis equal to the calculated estimation $x_i^{[k+1]}$ and increasing the quantization level by multiplying it with the zoom-out constant. 
Additionally, nodes increase the zoom-out counter $\nu_{\text{out}}$ to track the number of zoom-out operations performed. 
The primary goal of zoom-out is to adjust the quantizer's parameters to cover a broader range of values, potentially including the optimal solution; see Iteration Step~$3e.1$. 
\\ \noindent 
\textit{Part~$3$-A: Iteration -- Zoom-in.} 
When the estimate falls within the quantizer's dynamic range (i.e, when $x_i^{[k+1]} \in [b_q - N\Delta^{[\nu_{\text{total}}]}, b_q + N\Delta^{[\nu_{\text{total}}]})$), then nodes can estimate the optimal solution using the current quantizer parameters. 
In this scenario, nodes perform a zoom-in operation by setting the quantizer basis to be equal to the calculated estimation $x_i^{[k+1]}$ and reducing the quantization level by dividing it by the zoom-in constant. 
Additionally, nodes increase the zoom-in counter $\nu_{\text{in}}$ to track the number of zoom-in operations performed. 
The purpose of zooming in is to refine the quantizer's parameters, focusing on a narrower range of values that likely contains the optimal solution. 
Within this smaller range, nodes can estimate the solution with greater precision, as the updated $\Delta^{[\nu_{\text{total}}]}$ is smaller than its original value; see Iteration Step~$3e.2$. 

\subsection{Convergence of Algorithm~\ref{algor_4}}\label{algor_4_convergence} 

We now analyze the convergence of Algorithm~\ref{algor_4}. 
First, we introduce a proposition to demonstrate that by performing a zoom-out operation, nodes can adjust the quantizer's dynamic range in finite time to include the optimal solution. 
We then present a theorem establishing the linear convergence rate of our algorithm. 

\begin{prop}[Finite Zoom-out Instances]\label{zoom_out_locate_solution}   
    Let us assume that during the operation of Algorithm~\ref{algor_4} we have $x^* \notin (b_q - N\Delta^{[0]}, b_q + N\Delta^{[0]})$. 
    Then, after a finite number of time steps $\nu_0$ for which 
    \begin{equation}\label{bound_time_steps}
        \nu_0 > \left\lceil \frac{\log(x^*) - \log(N\Delta^{[0]})}{\log(c^{\text{out}})} \right\rceil , 
    \end{equation}
    nodes are able to calculate $\Delta^{[\nu_0]}$ such that $x^* \in (b_q - N\Delta^{[\nu_0]}, b_q + N\Delta^{[\nu_0]})$.
\end{prop}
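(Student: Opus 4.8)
The plan is to leverage the geometric growth of the quantization level caused by repeated zoom-out operations, so that the quantizer's dynamic range eventually engulfs the fixed, finite, nonnegative point $x^*$.

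First I would examine what happens at the convergence points while $x^*$ remains outside the quantizer's representable region. By hypothesis this is already the situation at the first convergence point, so Step~$3e.1$ of Algorithm~\ref{algor_4} fires: the basis is reset via \eqref{zoom_out_1}, the counter $\nu_{\text{out}}$ is incremented, and the level is updated through \eqref{zoom_out_2} to $c^{\text{out}}\Delta^{[0]}$. I would then argue inductively that, as long as the estimate produced by Algorithm~\ref{algorithm_1a} keeps landing outside $[b_q - N\Delta, b_q + N\Delta)$, Step~$3e.1$ is the branch taken at every convergence point (never the zoom-in branch~$3e.2$), so that after $\nu$ such instances the level has become $\Delta^{[\nu]} = (c^{\text{out}})^{\nu}\Delta^{[0]}$, since $c^{\text{out}} \in \mathbb{Q}_{>1}$ is a fixed multiplicative factor.

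Next I would track the half-width $N\Delta^{[\nu]} = N(c^{\text{out}})^{\nu}\Delta^{[0]}$ of the dynamic range, which tends to $+\infty$ because $c^{\text{out}} > 1$. Since $x^* \in \mathcal{X} \subset \mathbb{Q}_{\geq 0}$ is fixed and, after each zoom-out, $b_q$ is reset via \eqref{zoom_out_1} to a nonnegative estimate that lies on the same side of the old range as $x^*$ and no farther from the origin, the interval $(b_q - N\Delta^{[\nu]}, b_q + N\Delta^{[\nu]})$ contains $x^*$ once $N\Delta^{[\nu]} > x^*$. Writing this as $N(c^{\text{out}})^{\nu}\Delta^{[0]} > x^*$ and taking logarithms yields $\nu > \frac{\log(x^*) - \log(N\Delta^{[0]})}{\log(c^{\text{out}})}$; because the number of zoom-outs is an integer, it suffices that $\nu_0$ be strictly larger than the ceiling of the right-hand side, which is precisely~\eqref{bound_time_steps}. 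Finally, each convergence point is reached after finitely many iterations — Algorithm~\ref{algorithm_1a} terminates in finite time, and between convergence points the iterates contract linearly toward a neighborhood of $x^*$ by Theorem~\ref{theorem_convergence_stronglyConvex} — so these $\nu_0$ zoom-outs are all completed within a finite number of time steps.

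The step I expect to be the main obstacle is the rigorous treatment of the basis $b_q$: one must show that the resets in \eqref{zoom_out_1} cannot push $b_q$ past $x^*$ in a way that would force the half-width to exceed $x^*$ by more than a controlled margin, and one must rule out an intervening zoom-in (Step~$3e.2$) that would shrink $\Delta$ before $x^*$ is captured — equivalently, that a saturated quantizer always yields an estimate outside $[b_q - N\Delta, b_q + N\Delta)$. Pinning down that the basis stays trapped between $0$ and a quantity comparable to $x^*$ throughout this search phase is where the real work lies; once that is established, the logarithmic bound drops out of the geometric growth of $\Delta^{[\nu]}$.
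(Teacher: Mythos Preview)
Your core argument---repeated zoom-out yields $\Delta^{[\nu]}=(c^{\text{out}})^{\nu}\Delta^{[0]}$, so the half-width $N\Delta^{[\nu]}$ grows geometrically and eventually exceeds $x^*$, from which the logarithmic bound~\eqref{bound_time_steps} follows---is exactly the mechanism the paper uses. The difference is in how the moving basis is handled. You attempt to track $b_q$ through the resets in~\eqref{zoom_out_1} and correctly flag this as the main obstacle: one must control where $b_q$ lands after each zoom-out and rule out an intervening zoom-in. The paper does not solve this; it simply assumes $b_q$ stays fixed at $0$ throughout the zoom-out phase (i.e., that~\eqref{zoom_out_1} is not executed), derives the bound under that simplification, and explicitly remarks that the analysis with $b_q$ updating is more delicate and left to future work. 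So your instinct about the difficulty is right, but for the purposes of matching the paper's proof you may bypass it by the same assumption; the bound~\eqref{bound_time_steps} then drops out immediately from $N(c^{\text{out}})^{\nu_0}\Delta^{[0]}>x^*$.
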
 

\begin{proof} 
    Let us assume that during time step $k=0$ of Algorithm~\ref{algor_4} we have $b_q = 0$, and $x^* \notin (-N\Delta^{[0]}, N\Delta^{[0]})$. 
    In this case, nodes will zoom-out (see Iteration Step~$3e.2$ of Algorithm~\ref{algor_4}). 
    The updated quantization level is equal to $\Delta^{[1]} = c^{\text{out}} \Delta^{[0]}$, and the updated quantizer range (for the case where the quantizer is not saturated) is $(-N\Delta^{[1]}, N\Delta^{[1]})$. 
    Continuing this process (i.e., nodes zoom-out for subsequent time steps $\nu_0 > 1$) we have that nodes will be able to calculate an updated quantization level $\Delta^{[\nu_0]}$ for which $x^* \in (-N\Delta^{[\nu_0]}, N\Delta^{[\nu_0]})$ after
    $\nu_0$
    time steps where $\Delta^{[\nu_0]} = \Delta^{[0]} (c^{\text{out}})^{\nu_0}$ and
    $$
        \nu_0 > \left\lceil \frac{\log(x^*) - \log(N\Delta^{[0]})}{\log(c^{\text{out}})} \right\rceil. 
    $$
    Note that in this proof, without loss of generality, we assumed $b_q$ remains equal to zero during the zoom-out operation (i.e., \eqref{zoom_out_1} is not executed). 
    This means that the calculated bound in \eqref{bound_time_steps} can be further improved. 
    However, the analysis for the case where $b_q$ is updated is more challenging and will be considered as a future direction.
\end{proof}

We now show the linear convergence of Algorithm~\ref{algor_4} via the following theorem. 
The underlying concept is analogous to Theorem~\ref{theorem_convergence_stronglyConvex}, but we provide an outline of the proof for completeness. 

\begin{theorem}\label{converge_algor_4}
Under Assumptions~\ref{str_conn}--\ref{diam_quant}, when the step-size $\alpha$ satisfies $ \alpha \in (0, \frac{2n}{\mu + L}] $ where 
$L =  \sum_i L_i$, and $\mu = \min_i \mu_i$ 
Algorithm~\ref{algor_4} generates a sequence of points $ \{x^{[k]}\} $ (i.e., the variable $x_i^{[k]}$ of each node $v_i \in \mathcal{V}$) which satisfy 
	\begin{align}\label{linear_convergence_algor_4}
	\| \hat{x}^{[k+1]} - x^{*}\|
	\le
	(1-\frac{\alpha \mu}{n})\| \hat{x}^{[k]} - x^{*} \| + \mathcal{O}(\Delta^{[\nu_{\text{total}}]}) ,
	\end{align}
	where $\Delta^{[\nu_{\text{total}}]} $ is the quantizer level and
		\begin{align}
		 \mathcal{O}(\Delta^{[\nu_{\text{total}}]}) 
		 =&( \frac{4\alpha L}{n} + 2)\Delta^{[\nu_{\text{total}}]}. \label{throrem1_b_algor_4} 
		\end{align} 
\end{theorem}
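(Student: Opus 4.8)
The plan is to mirror the proof of Theorem~\ref{theorem_convergence_stronglyConvex}, exploiting the fact that whenever the quantizer is not saturated, the $N$-bit mid-rise quantizer $Q^{\text{$N$MRU}}(b_q,\cdot,\Delta^{[\nu_{\text{total}}]})$ acts exactly like the infinite-range quantizer $q_{\Delta}^s$ shifted by the basis $b_q$, so that the quantization error it introduces is bounded by $\Delta^{[\nu_{\text{total}}]}$ just as in~\eqref{eq_epsilon}. First I would partition the iteration index $k$ into consecutive \emph{epochs}, each epoch being a maximal block of steps during which $\nu_{\text{total}}$ is constant (hence the active level $\Delta^{[\nu_{\text{total}}]}$ and basis $b_q$ are fixed, the epoch ending when a zoom-out~\eqref{zoom_out_2} or a zoom-in~\eqref{zoom_in_2} occurs). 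Within a single epoch, Iteration Steps~$1$ and $2$ of Algorithm~\ref{algor_4} coincide with the gradient step $z_i^{[k+1]} = x_i^{[k]} - \alpha\nabla f_i(x_i^{[k]})$ followed by Algorithm~\ref{algorithm_1a} run with level $\Delta^{[\nu_{\text{total}}]}$, i.e.\ with the dynamics of Algorithm~\ref{algorithm_1} for that level.

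Second, I would show that while the quantizer stays unsaturated, i.e.\ $z_i^{[k+1]} \in [\,b_q - N\Delta^{[\nu_{\text{total}}]},\, b_q + N\Delta^{[\nu_{\text{total}}]}\,)$ for every $v_i \in \mathcal{V}$, the analogues of~\eqref{algorithm1a_ac}, the decomposition~\eqref{eq_epsilon}, and therefore the bounds of Lemma~\ref{lemma_uk_phik} all hold verbatim with $\Delta$ replaced by $\Delta^{[\nu_{\text{total}}]}$; the basis $b_q$ cancels because it is common to all nodes and enters $\hat z^{[k+1]} - x_i^{[k+1]}$ symmetrically. Feeding these into the recursion~\eqref{hatx_calcu2} and repeating the strong-convexity and $L$-smoothness contraction argument of Appendix~\ref{appen_theorem} then yields~\eqref{linear_convergence_algor_4} for every step inside the epoch, with the constant of~\eqref{throrem1_b_algor_4}.

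Third, I would dispose of the saturated steps, which are precisely the zoom-out steps, by appealing to Proposition~\ref{zoom_out_locate_solution}: after at most $\nu_0$ zoom-outs, with $\nu_0$ bounded as in~\eqref{bound_time_steps}, the quantizer range brackets $x^*$, and thereafter the contraction in~\eqref{linear_convergence_algor_4} drives the iterates toward $x^*$, keeping them inside the (now refined) dynamic range. Any further re-saturation triggered during a zoom-in phase is again resolved in finitely many steps by the same proposition, so only finitely many iterations fall outside the scope of the contraction argument; concatenating the per-epoch estimates across the level changes~\eqref{zoom_out_2} and \eqref{zoom_in_2} gives~\eqref{linear_convergence_algor_4} with the stated $\mathcal{O}(\Delta^{[\nu_{\text{total}}]})$ term.

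The hard part will be the bookkeeping around saturation: for the finitely many zoom-out iterations the quantization error is of the order of the dynamic range rather than $\Delta^{[\nu_{\text{total}}]}$, so~\eqref{linear_convergence_algor_4} must be read as holding off this finite exceptional set, and one has to use Proposition~\ref{zoom_out_locate_solution} together with an argument that the zoom-in phase cannot eject the iterate from the range infinitely often to ensure the set is genuinely finite. Everything else is a direct transcription of the proof of Theorem~\ref{theorem_convergence_stronglyConvex}.
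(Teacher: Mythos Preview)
Your approach is essentially the same as the paper's: partition the iterations into epochs during which $\nu_{\text{total}}$ is fixed and, within each epoch, reduce to the analysis of Theorem~\ref{theorem_convergence_stronglyConvex} with $\Delta$ replaced by $\Delta^{[\nu_{\text{total}}]}$, then concatenate across level changes. Your explicit treatment of the saturated (zoom-out) steps via Proposition~\ref{zoom_out_locate_solution} is in fact more careful than the paper's own proof sketch, which simply asserts that Algorithm~\ref{algor_4} behaves as Algorithm~\ref{algorithm_1} between refinement instants without discussing saturation.
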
 

\begin{proof} 
Let us assume that, during the operation of Algorithm~\ref{algor_4}, for a set of time steps $[k_0, k_1)$ the condition in Iteration-Step~$3$ does not hold (i.e., $x_i^{[k+1]} < x_i^{[k]}, \forall k \in [k_0, k_1)$). 
In this case, for that set of time steps, Algorithm~\ref{algor_4} exhibits linear convergence rate towards the optimal solution as described in Theorem~\ref{theorem_convergence_stronglyConvex}. 
If at time step $k_1$ the condition in Iteration-Step~$3$ holds, then the quantization level is refined and the quantizer basis is adjusted. 
Then, for a set of subsequent time steps, Algorithm~\ref{algor_4} continues its operation converging linearly towards the optimal solution. 
As a consequence, the operation of Algorithm~\ref{algor_4} allows nodes to converge to the optimal solution linearly (as described in Theorem~\ref{theorem_convergence_stronglyConvex}) with the difference being that at some time steps the optimization process stops and restarts with a refined quantization level $\Delta^{[\nu_{\text{total}}]}$. 
Note now that the term $\mathcal{O}(\Delta^{[\nu_{\text{total}}]})$ in \eqref{linear_convergence_algor_4} appears due to nodes utilizing the uniform quantizer $Q^{\text{$N$MRU}}$ during Algorithm~\ref{algor_4}. 
The precision of the optimal solution calculation is impacted by the term $\mathcal{O}(\Delta^{[\nu_{\text{total}}]})$. 
Since during Algorithm~\ref{algor_4} the quantization level $\Delta^{[\nu_{\text{total}}]}$ is decreased infinitely often (i.e., nodes will perform zoom-in operations infinitely often), this means that at some optimization step $k_0$ the condition in Iteration-Step~$3c$ will hold for every node. 
Thus, that after Iteration-Steps~$3d$, $3e$, nodes will decide to terminate the operation of Algorithm~\ref{algor_4}. 
As a result, Algorithm~\ref{algor_4} enables nodes to calculate the optimal solution of problem \textbf{P3} with linear convergence, and terminate their operation while exchanging messages consisting of $N$ bits. 
\end{proof}

\begin{remark}
    In the previous section, Algorithm~\ref{algorithm_2} requires refining the quantization level to enhance the precision of the optimal solution. 
    This results in an increased number of bits exchanged among nodes. 
    In contrast, Algorithm~\ref{algor_4} maintains a fixed number of $N$ exchanged bits for each message. 
    This characteristic is particularly significant in environments where communication bandwidth is limited to only a few bits among nodes (e.g., wireless networks or industrial control systems). 
    Additionally, Algorithm~\ref{algor_4} can approach the optimal solution with precision similar to that of Algorithm~\ref{algorithm_2} despite the fixed-bit communication. 
    Moreover, Algorithm~\ref{algor_4} retains the advantageous feature of Algorithm~\ref{algorithm_2}, allowing nodes to distributively determine when to terminate the operation based on performance guarantees. 
    However, the trade-off is that Algorithm~\ref{algor_4} may require more time steps to converge compared to Algorithm~\ref{algorithm_2} (which is a logical consequence of exchanging fewer bits of information per communication round). 
    This trade-off will be also shown in Section~\ref{sec:results}. 
\end{remark}

\section{Application: Distributed Sensor Fusion for Target Localization}\label{sec:results} 

We now present an application of our proposed algorithms to the problem of distributed sensor fusion for target localization \cite{2020:Xiaohua_Fuwen_Survey, 2022:Doostmohammadian_Houman_TrackingTarget}. 
In this problem, multiple sensors (or nodes) are deployed across a network. 
Each node is equipped with a sensor that provides noisy measurements of the target's position. 
The objective is to estimate the target's position by fusing the local measurements from all sensors in a distributed manner. 
For simplicity, we consider (i) the static distributed sensor fusion problem (though our results can be extended to dynamic scenarios), and (ii) each node has one measurement of the target that is one-dimensional (although the approach can be easily extended to each node having  multiple multi-dimensional measurements). 
In the first part of this section, we compare the operation of Algorithm~\ref{algorithm_1}, Algorithm~\ref{algorithm_2}, and Algorithm~\ref{algor_4} in terms of convergence speed and precision, and total communication requirements (see Section~\ref{comparison_3algs} below). 
In the second part, we compare the operation of our algorithms against other algorithms from the literature in terms of convergence speed (see Section~\ref{sec:Comparisons} below).

\subsection{Operational Comparisons of Algorithm~\ref{algorithm_1}, Algorithm~\ref{algorithm_2}, and Algorithm~\ref{algor_4}}\label{comparison_3algs}

In Fig.~\ref{20_nodes_V4} we demonstrate the operation of our Algorithm~\ref{algorithm_1}, Algorithm~\ref{algorithm_2}, and Algorithm~\ref{algor_4} over a random digraph of $20$ sensor nodes. 
We show the error $e^{[k]}$ defined as 
\begin{equation}\label{eq:distance_optimal}
    e^{[k]} = \sqrt{ \sum_{i=1}^n \frac{(x_i^{[k]} - x^*)^2}{(x_i^{[0]} - x^*)^2} } , 
\end{equation}
in a logarithmic scale against the number of iterations with $x^*$ being the optimal solution of the optimization problems \textbf{P1}, \textbf{P2}, \textbf{P3} (i.e., the position of the target). 
For each node $v_i$ we have: $\alpha = 0.12$, initial estimated position of the target $x_i^{[0]} \in [1, 5]$, a quadratic local cost function of the form $f_i(x_i^{[0]}) = \frac{1}{2} \beta_i (x_i^{[0]} - x_0)^2$ with $x_0$ being the measurement obtained by node $v_i$ (randomly chosen in the set $\{ 1, 2, 3, 4, 5 \}$), and $\beta_i$ randomly chosen in the set $\{ 1, 2, 3, 4, 5 \}$, zoom-in constant $c^{\text{in}} = 4/3$, zoom-out constant $c^{\text{out}} = 2$, refinement constant $c_r=2$, initial quantizer basis $b_q = 0$, and error bounds 
$\varepsilon_{s_1} = 10^{-3}$, $\varepsilon_{s_2} = 10^{-3}$ for Algorithm~\ref{algorithm_2} and $\varepsilon_{s_1} = 10^{-5}$, $\varepsilon_{s_2} = 10^{-5}$ for Algorithm~\ref{algor_4}. 
For Algorithm~\ref{algorithm_1} we have that the quantization level $\Delta \in \{0.1, 0.01, 0.001\}$. 
For Algorithm~\ref{algorithm_2} we have initial quantization level $\Delta = 0.1$. 
For Algorithm~\ref{algor_4} we have initial quantization level $\Delta^{[0]} = 0.1$, and each message consists of $N=3$~bits. 

\begin{figure}[t]
    \centering
    \includegraphics[width=0.9\linewidth]{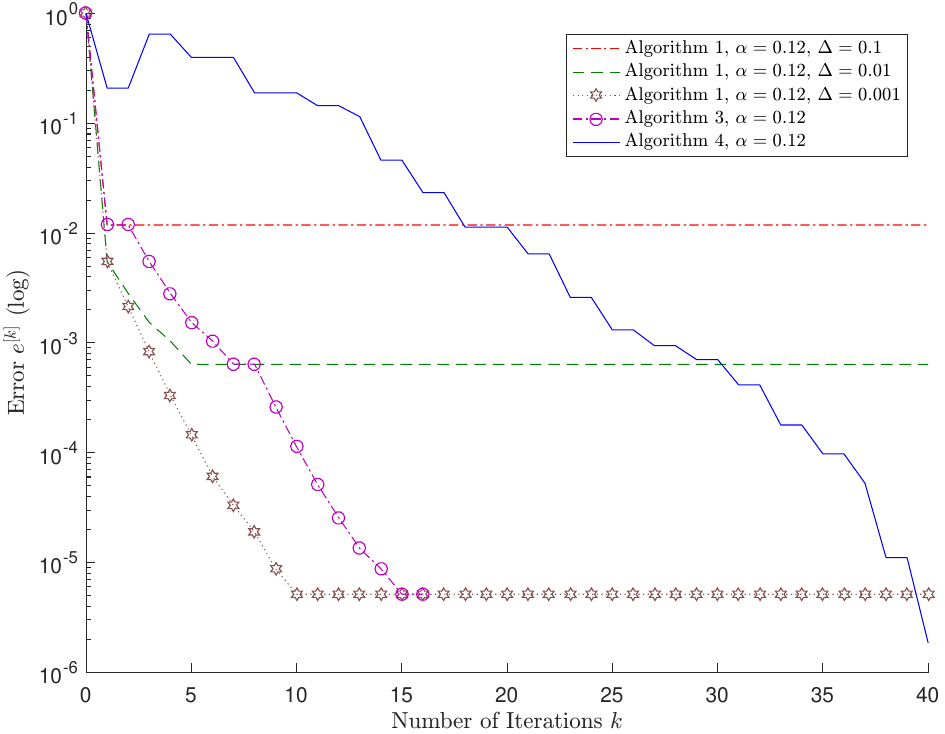}
    \caption{Execution of Algorithm~\ref{algorithm_1}, Algorithm~\ref{algorithm_2}, and Algorithm~\ref{algor_4} over a random digraph of $20$ nodes.}
    \label{20_nodes_V4}
\end{figure}

In Fig.~\ref{20_nodes_V4}, we focus on the  following two cases: 
\\ \noindent 
\textbf{A-A.} 
We evaluate the convergence speed and estimation precision of our algorithms. 
\\ \noindent 
\textbf{A-B.} 
We calculate the total communication (i.e., total transmitted bits) required in order to estimate the optimal solution for various levels of precision. 

\textbf{A-A. Convergence Speed and Estimation Precision.} 
In Fig.~\ref{20_nodes_V4} we plot Algorithm~\ref{algorithm_1} for different quantization levels. 
We can see that the optimal solution estimation precision depends on the quantization level. 
Additionally, we notice that Algorithm~\ref{algorithm_2} terminate their operation in time steps $16$, and $40$, respectively, due to performance criteria being fulfilled.  
In Fig.~\ref{20_nodes_V4}, we can see that Algorithm~\ref{algorithm_1} outperforms Algorithm~\ref{algorithm_2} which outperforms  Algorithm~\ref{algor_4} for all $\Delta$ values in terms of convergence speed. 
However, Algorithm~\ref{algorithm_2} and Algorithm~\ref{algor_4} enable nodes to achieve higher precision in approaching the optimal solution compared to Algorithm~\ref{algorithm_1}. 
This is due to their capability to refine the quantization level based on performance criteria. 
In contrast, Algorithm~\ref{algorithm_1} only enables nodes to converge to a specific neighborhood of the optimal solution (which is dependent on the utilized quantization level, without the ability to enhance precision further).
Additionally, in Algorithm~\ref{algorithm_1} and Algorithm~\ref{algorithm_2}, the number of bits per quantized message depends on the chosen quantization level. 
This implies that each message may consist of more than $3$ bits. 
Specifically, in Algorithm~\ref{algorithm_1} the messages may exceed $3$ bits for all $\Delta \in \{0.01, 0.001\}$. 
Furthermore, in Algorithm~\ref{algorithm_2} the progressive refinement of the quantization level results in an increasing number of bits in the messages exchanged among nodes. 
This increase leads to higher bandwidth requirements and may pose challenges related to bandwidth constraints within the network (as analyzed in \textbf{A-B.} below). 
In contrast, Algorithm~\ref{algor_4} can refine its quantization level and estimate the optimal solution while respecting the $3$-bit communication constraint among nodes.

\textbf{A-B. Total Communication.} 
In Table~\ref{bitcomparisons}, we compare the communication requirements (i.e., the number of bits communicated from each node per optimization time step) of Algorithm~\ref{algorithm_1}, Algorithm~\ref{algorithm_2}, and Algorithm~\ref{algor_4} for various levels of estimation precision as depicted in Fig.~\ref{20_nodes_V4}. 
The total number of transmitted bits $b_t$ is calculated as $b_t = c_s \cdot b_{pm} \cdot n_{tt}$, where $c_s$ is the convergence time step, $b_{pm}$ is the number of bits per transmitted message, and $n_{tt}$ is the total number of transmissions by nodes in the network. 
The convergence time step $c_s$ is shown in Fig.~\ref{20_nodes_V4}. 
The number of bits per transmitted message $b_{pm}$ depends on the quantization level. 
Specifically, for Algorithm~\ref{algor_4}, the transmitted messages consist of $3$ bits. 
For Algorithm~\ref{algorithm_2}, the transmitted messages consist of $7$ bits for steps $k = 0, 1, 2$, $10$ bits for steps $k = 3, \ldots, 8$, and $14$ bits for steps $k = 9, \ldots, 16$ due to the refinement of the quantization level. 
For Algorithm~\ref{algorithm_1}, the transmitted messages consist of $7$, $10$, and $14$ bits for $\Delta = 0.1, 0.01$, and $0.001$, respectively.
The total number of transmissions $n_{tt}$ for each node is determined by the operation of Algorithm~\ref{algorithm_1a}. 
It is important to note that our three optimization algorithms rely on Algorithm~\ref{algorithm_1a} for coordination among nodes. 
Algorithm~\ref{algorithm_1a} converges in a probabilistic manner. 
According to \cite[Table~$2$]{RIKOS2022110621}, the average total number of transmissions during the execution of Algorithm~\ref{algorithm_1a} is $211.88$.

\begin{center}
\captionof{table}{Total Number of Communicated Bits for Algorithm~\ref{algorithm_1}, Algorithm~\ref{algorithm_2}, and Algorithm~\ref{algor_4} over a random digraph of $20$ nodes.} 
\label{bitcomparisons} 
\begin{tabular}{|c||r|r|r|} 
\hline
Algorithm / $e^{[k]}$ & $10^{-2}$ & $10^{-3}$ & $10^{-5}$ \\
\cline{1-4} 
Alg.~\ref{algorithm_1} ($\Delta = 0.1$) & $4449.48$ & -- & -- \\ 
Alg.~\ref{algorithm_1} ($\Delta = 0.01$) & $6356.40$ & $10594.00$ & -- \\ 
Alg.~\ref{algorithm_1} ($\Delta = 0.001$) & $8898.96$ & $14831.60$ & $32629.52$ \\ 
Algorithm~\ref{algorithm_2} & $4449.48$ & $15043.28$ & $35807.72$ \\
Algorithm~\ref{algor_4} & $11441.52$ & $17162.28$ & $25425.60$ \\ 
\hline 
\end{tabular} 
\end{center}

In Table~\ref{bitcomparisons}, let us focus on the scenario where $e^{[k]}$ reaches $10^{-2}$ or $10^{-3}$. 
In this case, the total communication requirements of Algorithm~\ref{algor_4} exceed those of Algorithm~\ref{algorithm_1} and Algorithm~\ref{algorithm_2}. 
This is due to the faster convergence of Algorithm~\ref{algorithm_1} and Algorithm~\ref{algorithm_2}. 
Specifically, although Algorithm~\ref{algorithm_1} and Algorithm~\ref{algorithm_2} require nodes to exchange messages with more than $3$ bits, their rapid convergence results in lower overall communication requirements compared to Algorithm~\ref{algor_4}. 
However, it is noteworthy that when $e^{[k]}$ reaches $10^{-5}$, the communication requirements of Algorithm~\ref{algor_4} are significantly lower than those of Algorithm~\ref{algorithm_1} and Algorithm~\ref{algorithm_2}. 
To estimate the optimal solution with high precision, Algorithm~\ref{algorithm_1} and Algorithm~\ref{algorithm_2} require nodes to exchange messages consisting of $14$ bits. While this enables fast convergence, it also greatly increases the communication requirements. 
In contrast, Algorithm~\ref{algor_4} maintains low communication requirements because (i) the exchanged messages consist of $3$ bits, and (ii) the basis of the quantizer changes throughout the operation. 
Consequently, for achieving an error $e^{[k]}$ of $10^{-5}$ or lower, the communication requirements of Algorithm~\ref{algor_4} are significantly less than those of Algorithm~\ref{algorithm_1} and Algorithm~\ref{algorithm_2}. 

\subsection{Comparison with the Literature}\label{sec:Comparisons} 

We now compare Algorithm~\ref{algorithm_1}, Algorithm~\ref{algorithm_2}, and Algorithm~\ref{algor_4} against existing algorithms from the literature on static, strongly connected digraphs comprising of $20$ nodes. 
We display the error $e^{[k]}$ (as defined in \eqref{eq:distance_optimal}) on a logarithmic scale relative to the number of iterations. 
The error $e[k]$ was evaluated and averaged over $20$ trials. 
Note here that for the operation of the algorithms the parameters of each node $v_i$ are consistent with those detailed in Section~\ref{comparison_3algs} with the only difference being that we set $\varepsilon_{s_1} = 0$, $\varepsilon_{s_2} = 0$  for Algorithm~\ref{algorithm_2} and Algorithm~\ref{algor_4}. 

In Fig.~\ref{comp_v3} we can see that Algorithm~\ref{algorithm_2} is the fastest among our algorithms and other algorithms in the literature, outperformed only by \cite{2022:Jiang_Charalambous} (due to the fact that the latter enables nodes to exchange real valued messages). 
Additionally, Algorithm~\ref{algorithm_2} and Algorithm~\ref{algor_4} can either estimate the exact optimal solution (by setting $\varepsilon_{s_1} = 0$, $\varepsilon_{s_2} = 0$) or terminate their operation distributively based on performance guarantees--a characteristic not found in any other algorithms in the literature. 
This advantage is of particular importance since most algorithms assume that the messages exchanged among nodes are real numbers and admit asymptotic convergence within some error. 
Algorithm~\ref{algorithm_1} exhibits faster convergence compared to most algorithms until it reaches a neighborhood of the optimal solution (as it does not refine the quantization level). 
Lastly, while Algorithm~\ref{algor_4} is outperformed by our other two algorithms, its capability to refine the quantization level and adjust the quantizer's basis allows it to outperform most algorithms in the literature. 
Moreover, it is uniquely designed to function in environments with limited bandwidth using only $3$ bits per message. 
This characteristic is particularly advantageous in practical applications, although it affects the convergence rate as we can see in Fig.~\ref{comp_v3}. 

\begin{figure}[t]
    \centering
    \includegraphics[width=\linewidth]{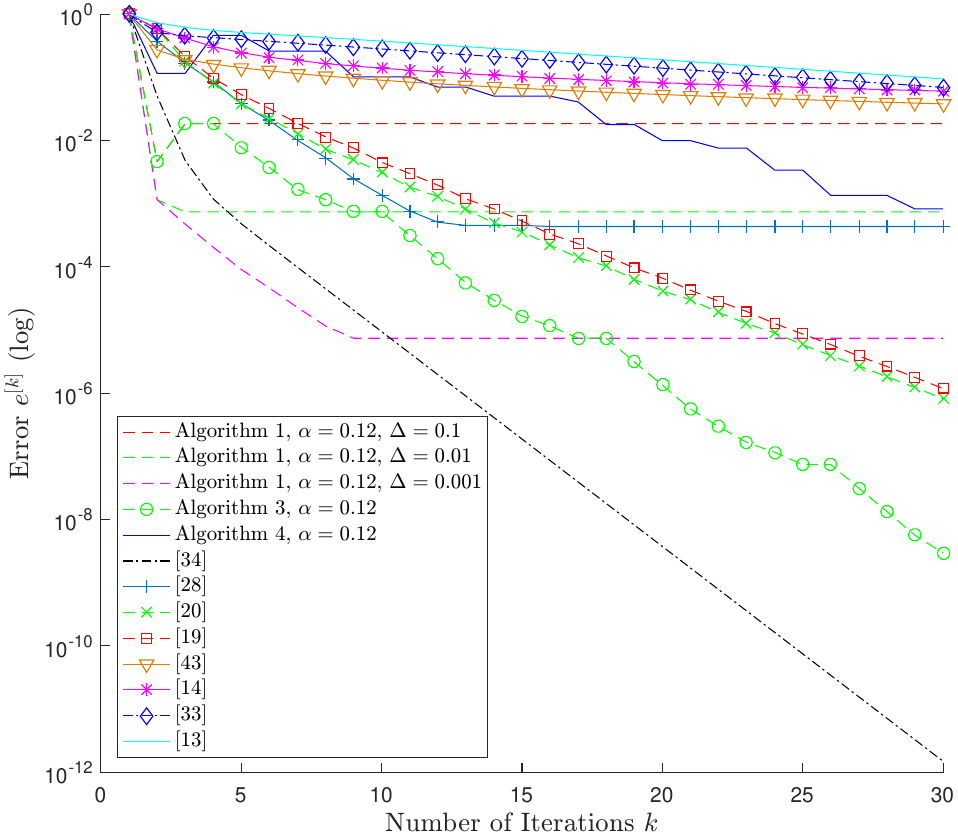}
    \caption{Error $e^{[k]}$ (defined in \eqref{eq:distance_optimal}) in logarithmic scale for Algorithm~\ref{algorithm_1}, Algorithm~\ref{algorithm_2}, Algorithm~\ref{algor_4}, with the approaches in \cite{2022:Jiang_Charalambous}, \cite{2024_Rikos_Themis_Johan_TCNS_CPU},
    \cite{2021:Nedic_PushPull}, 
    \cite{2018:Khan_AB},
    \cite{2020:Doostmohammadian_Charalambous},
    \cite{2018:Xie},
    \cite{2018:Khan_addopt}, and
    \cite{2009:Nedic_Optim}
    averaged over $20$ randomly generated strongly connected digraphs of $20$ nodes each. 
    }
    \label{comp_v3}
\end{figure}

\section{Conclusions and Future Directions}\label{sec:conclusions}

In our paper we focused on the unconstrained distributed optimization problem. 
We introduced three distributed optimization algorithms that exhibit efficient communication among nodes by enabling them to exchange quantized messages. 
We analyzed their operation and established their linear convergence. 
Additionally, we demonstrated their capability to either enhance the estimation of the optimal solution or terminate their operation according to predefined performance guarantees in a distributed manner. 
Finally, we presented a motivating application on distributed target localization, and concluded our paper with comparisons against other algorithms from the literature. 

Our main future research direction encompasses the extension of our algorithms to handle constrained optimization problems that are prevalent in many real-world applications. 
Additionally, we plan to investigate the performance of our algorithms in open networks (i.e., networks where the number of nodes and edges changes over time). 

\section*{Acknowledgments} 
The authors would like to gratefully acknowledge Professor Nicolò Michelusi (Arizona State University) in formulating and refining the proof, which significantly strengthened the validity and rigor of our findings. 

\appendices
%
%
%
%
\section{Proof of Theorem~\ref{theorem_convergence_stronglyConvex}}
\label{appen_theorem}

Before presenting the proof of our theorem, we first present the following lemma. 
\begin{lemma}
\label{lemma_rule}
	For any function $f(x)$ which is $\mu $ strongly convex and $L$ smooth, let $\beta \le \frac{2}{\mu + L}$, then, $\forall x_1, x_2 \in \mathbb{R}^p$, the following inequality holds:
	\begin{align}\label{rule}
\|x_1-x_2  - \beta ( \nabla f(x_1) - \nabla f(x_2) ) \|  
\le (1-\mu \beta )\|x_1-x_2\|.
\end{align}
\end{lemma}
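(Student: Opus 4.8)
The plan is to establish the standard contraction inequality for a gradient step on a strongly convex and smooth function, which is a classical result in convex optimization (see e.g. \cite{bubeck2015convex, nocedal1999numerical}). First I would recall two consequences of $\mu$-strong convexity and $L$-smoothness of $f$: the \emph{coercivity} (or co-coercivity) of the gradient, namely that for all $x_1, x_2 \in \mathbb{R}^p$,
\begin{equation*}
(\nabla f(x_1) - \nabla f(x_2))^\top (x_1 - x_2) \ge \frac{\mu L}{\mu + L} \|x_1 - x_2\|^2 + \frac{1}{\mu + L} \|\nabla f(x_1) - \nabla f(x_2)\|^2 .
\end{equation*}
This inequality is the key ingredient; it is obtained by applying the standard strong-convexity/smoothness bound to the function $g(x) = f(x) - \frac{\mu}{2}\|x\|^2$, which is convex and $(L-\mu)$-smooth, and then simplifying.

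Next I would expand the squared left-hand side of \eqref{rule} directly:
\begin{align*}
\|x_1 - x_2 - \beta(\nabla f(x_1) - \nabla f(x_2))\|^2
&= \|x_1 - x_2\|^2 - 2\beta (\nabla f(x_1) - \nabla f(x_2))^\top (x_1 - x_2) \\
&\quad + \beta^2 \|\nabla f(x_1) - \nabla f(x_2)\|^2 .
\end{align*}
Substituting the coercivity bound into the middle term and collecting the coefficient of $\|\nabla f(x_1) - \nabla f(x_2)\|^2$, which becomes $\beta^2 - \frac{2\beta}{\mu+L} = \beta(\beta - \frac{2}{\mu+L})$, one sees that this coefficient is nonpositive precisely when $\beta \le \frac{2}{\mu+L}$, so that term can be dropped. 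What remains is
\begin{equation*}
\|x_1 - x_2 - \beta(\nabla f(x_1) - \nabla f(x_2))\|^2 \le \left(1 - \frac{2\beta \mu L}{\mu + L}\right)\|x_1 - x_2\|^2 .
\end{equation*}
Finally I would show $1 - \frac{2\beta\mu L}{\mu+L} \le (1 - \mu\beta)^2$, which rearranges to $0 \le \mu^2\beta^2 - 2\mu\beta + \frac{2\beta\mu L}{\mu+L} = \mu\beta\left(\mu\beta - 2 + \frac{2L}{\mu+L}\right) = \mu\beta\left(\mu\beta - \frac{2\mu}{\mu+L}\right) = \mu^2\beta\left(\beta - \frac{2}{\mu+L}\right)\cdot(-1)$; wait—more directly, $\mu\beta - 2 + \frac{2L}{\mu+L} = \mu\beta - \frac{2\mu}{\mu+L} = \mu(\beta - \frac{2}{\mu+L}) \le 0$, so the product $\mu\beta \cdot (\text{nonpositive})$ is nonpositive, giving the bound after a sign check. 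Taking square roots yields \eqref{rule}.

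The main obstacle, such as it is, is getting the coercivity inequality and the constant $\frac{2}{\mu+L}$ exactly right; this is the one genuinely nontrivial step and the rest is bookkeeping. One must be careful that the bound is applied to $f$ directly (not to $F$) and that the edge case $\mu = L$ is handled, where $\frac{2}{\mu+L} = \frac{1}{\mu}$ and the contraction factor becomes zero, consistent with a single Newton-like step being exact for a quadratic. Since this is a textbook lemma, I would either cite it or include the short self-contained argument above; the subsequent application in the proof of Theorem~\ref{theorem_convergence_stronglyConvex} will invoke it with $\beta = \hat\alpha = \alpha/n$ and $f$ replaced by the aggregate gradient term, so the statement is deliberately kept in this general form.
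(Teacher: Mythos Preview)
Your approach matches the paper's exactly through the expansion of the square and the application of the coercivity inequality. The gap is in the final step. After substituting coercivity you obtain
\[
\|x_1-x_2-\beta(\nabla f(x_1)-\nabla f(x_2))\|^2 \le \Bigl(1-\tfrac{2\beta\mu L}{\mu+L}\Bigr)\|x_1-x_2\|^2 - \beta\Bigl(\tfrac{2}{\mu+L}-\beta\Bigr)\|\nabla f(x_1)-\nabla f(x_2)\|^2,
\]
and you then \emph{drop} the second (nonpositive) term. But the remaining factor $1-\tfrac{2\beta\mu L}{\mu+L}$ is in general \emph{larger} than $(1-\mu\beta)^2$, not smaller: the difference is
\[
(1-\mu\beta)^2-\Bigl(1-\tfrac{2\beta\mu L}{\mu+L}\Bigr)=\mu^2\beta\Bigl(\beta-\tfrac{2}{\mu+L}\Bigr)\le 0
\]
whenever $\beta\le\tfrac{2}{\mu+L}$. (For instance, with $\mu=1$, $L=3$, $\beta=\tfrac14$ one gets $\tfrac{5}{8}>\tfrac{9}{16}$.) Your own algebra arrives at exactly this expression and the ``sign check'' you defer is precisely where the argument fails: the inequality goes the wrong way.

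The fix, which is what the paper does, is \emph{not} to discard the gradient-difference term but to lower-bound it using strong convexity, $\|\nabla f(x_1)-\nabla f(x_2)\|^2\ge\mu^2\|x_1-x_2\|^2$. Since its coefficient $-\beta(\tfrac{2}{\mu+L}-\beta)$ is nonpositive, this substitution is valid and yields
\[
\Bigl(1-\tfrac{2\beta\mu L}{\mu+L}\Bigr)-\beta\Bigl(\tfrac{2}{\mu+L}-\beta\Bigr)\mu^2
=1-\tfrac{2\beta\mu(L+\mu)}{\mu+L}+\mu^2\beta^2
=(1-\mu\beta)^2,
\]
which is exactly what is needed. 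So the missing ingredient is this one additional application of strong convexity before discarding anything.
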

\begin{proof} Similarly to the proof of Theorem 1 in~\cite{10279097}, we have that
\begin{align}
\|x_1-x_2  -& \beta ( \nabla f(x_1) - \nabla f(x_2) ) \| \\
=& ( \|x_1-x_2\|^2  \beta^2 \| \nabla f(x_1) - \nabla f(x_2)\|^2  \nonumber  \\
  &- 2\beta (x_1-x_2)^\top (\nabla f(x_1) - \nabla f(x_2)) )^{\frac{1}{2}}\nonumber\\
 \le& ( (1- \beta\frac{2\mu L}{\mu + L})  \|x_1-x_2\|^2 \nonumber \\ &-\beta (\frac{2}{\mu + L} -\beta )\| \nabla f(x_1) - \nabla f(x_2)\|^2 ) ^{\frac{1}{2}} \label{rule_1}  \\
\le& (1-\mu \beta )\|x_1-x_2\|, \label{rule_2}
\end{align} 
where \eqref{rule_1} comes from \cite[Theorem 2.1.12]{nesterov2003introductory}, (i.e., $(x_1-x_2)^\top (\nabla f(x_1) - \nabla f(x_2)) \ge \frac{\mu L}{\mu +L} \|x_1-x_2\|^2 + \frac{1}{\mu +L}\| \nabla f(x_1) - \nabla f(x_2)\|^2$), and \eqref{rule_2} comes from $ \| \nabla f(x_1) - \nabla f(x_2)\|^2 \ge \mu \|x_1-x_2\|^2 $ and $\beta \le \frac{2}{\mu + L}$.
\end{proof}

\begin{remark}
The contraction property of gradient descent can be also proven using the Mean Hessian Theorem \cite[\S~II.E]{larsson2025unified}.
\end{remark}

We are now ready to show the proof of Theorem~\ref{theorem_convergence_stronglyConvex}. 
From~\eqref{hatx_calcu2} we have 
\begin{align}
\hat{x}^{[k+1]} -x^{*} 
=&(\hat{x}^{[k]}-x^{*} - \hat{\alpha}  \hat u^{[k]}) +( \hat{\alpha}  \hat u^{[k]}- \hat \alpha u^{[k]}) + e^{[k+1]}. 
\end{align}
Combining $\sum_{i=1}^{n} \nabla f_i(x^{*})=\mathbf{0}$ with  $\hat{\alpha} = \frac{\alpha}{n}$, and \eqref{rule} in Lemma~\ref{lemma_rule}, when 
\begin{align}\label{alpha_upperbound}
\alpha \le \frac{2n}{\mu +L}, \ L =  \sum_i L_i, \ \text{and} \ \mu = \min_i \mu_i, 
\end{align} 
we have
\begin{align}
\| \hat{x}^{[k]}-&x^{*} - \hat{\alpha} \hat u^{[k]}\| \\
=&\|\hat{x}^{[k]}-x^{*} - \hat{\alpha} (\sum_{i=1}^{n} \nabla f_i(x_i^{[k]}) - \sum_{i=1}^{n} \nabla f_i(x^{*})) \|\nonumber \\
\le& (1-\hat{\alpha}\mu )\|\hat{x}^{[k]}-x^{*}\|\label{first_bound}. 
\end{align}
Let us denote $ \tilde{x}^{[k]} = \hat{x}^{[k]} - x^{*} $.
Combining \eqref{first_bound} with \eqref{uk_hat} and $\|e^{[k]}\|\le 2\Delta$, we arrive at 
\begin{align}\label{before_final_bound}
\| \tilde{x}^{[k+1]}\|
&\le (1-\hat{\alpha}\mu )\| \tilde{x}^{[k]}\| +  4\hat{\alpha}L\Delta + 2\Delta.
\end{align}

\noindent
Finally, let us denote $\vartheta \coloneqq 1-\frac{\alpha \mu}{n}$, and $\mathcal{O}(\Delta) 
		 =( \frac{4\alpha L}{n} + 2)\Delta$.  
From~\eqref{before_final_bound} we get 
\begin{equation}
\begin{aligned}
\| \hat{x}^{[k+1]} - x^{*}\|
\le&
\vartheta\| \hat{x}^{[k]} - x^{*}  \|   + \mathcal{O}(\Delta)\\
\le&  \vartheta^2\| \hat{x}^{[k-1]} - x^{*}  \| + (\vartheta+1)\mathcal{O}(\Delta)\\
\le&  \vartheta^3\| \hat{x}^{[k-2]} - x^{*}  \| + (\vartheta^2+ \vartheta+1)\mathcal{O}(\Delta)
\\
\le& \vartheta^{k+1}\| \hat{x}^{[0]} - x^{*}  \| + \frac{1-\vartheta^{k+1}}{1-\vartheta}\mathcal{O}(\Delta).\label{final_proof_bound}
\end{aligned}
\end{equation}
This concludes the proof of our theorem. 
Let us finally note that considering \eqref{alpha_upperbound} and for $\alpha \in (0, \frac{2n}{\mu + L}]$ we have
\begin{align}\label{alpha_interval}
1>1-\hat{\alpha}\mu \ge 1-\frac{2\mu}{\mu +L} = \frac{L-\mu}{\mu +L} \ge 0. 
\end{align}  
Thus, in~\eqref{final_proof_bound} when $k \rightarrow \infty $, based on~\eqref{alpha_interval}, we have 
\begin{equation}
\begin{aligned}
\| \hat{x}^{[k]} - x^{*}\|
< \frac{n}{\alpha \mu}\mathcal{O}(\Delta).
\end{aligned}
\end{equation}
These results signify that during the operation of Algorithm~\ref{algorithm_1} the state of each node linearly converges to a neighborhood of the optimal solution $x^*$ (as shown in Section~\ref{sec:results}).

%
%
%
%



\bibliographystyle{IEEEtran}
\bibliography{bibliography}

\begin{IEEEbiography}[{\includegraphics[width=1in,height=1.25in,clip,keepaspectratio]{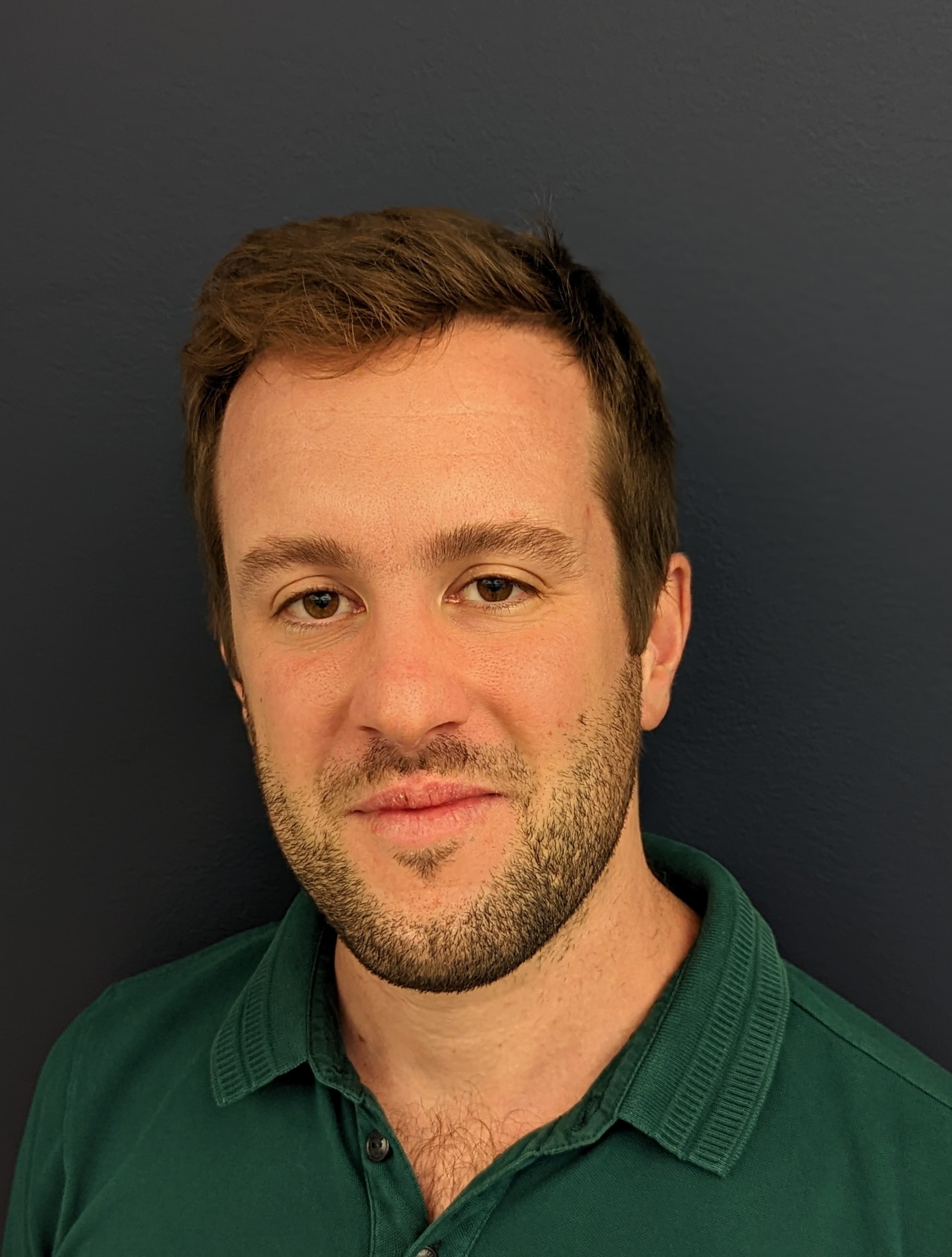}}]{Apostolos I. Rikos} (M'16) is an Assistant Professor at the Artificial Intelligence Thrust of the Information Hub, The Hong Kong University of Science and Technology (Guangzhou), Guangzhou, China. 
He is also affiliated with the Department of Computer Science and Engineering, The Hong Kong University of Science and Technology, Clear Water Bay, Hong Kong, China.
He received his B.Sc., M.Sc., and Ph.D. degrees in Electrical Engineering from the Department of Electrical and Computer Engineering, University of Cyprus in 2010, 2012, and 2018, respectively.
In 2018, he joined the KIOS Research and Innovation Center of Excellence in Cyprus, where he was a Research Lecturer. 
In 2020, he joined the Division of Decision and Control Systems of KTH Royal Institute of Technology as a Postdoctoral Researcher. 
In 2023, he joined the Department of Electrical and Computer Engineering, Division of Systems Engineering, at Boston University as a Postdoctoral Associate.
His research interests are in the area of distributed optimization and learning, distributed network control and coordination, privacy and security. 
\end{IEEEbiography}

\vspace{-0.2cm}

\begin{IEEEbiography}[{\includegraphics[width=1in,height=1.25in,clip,keepaspectratio]{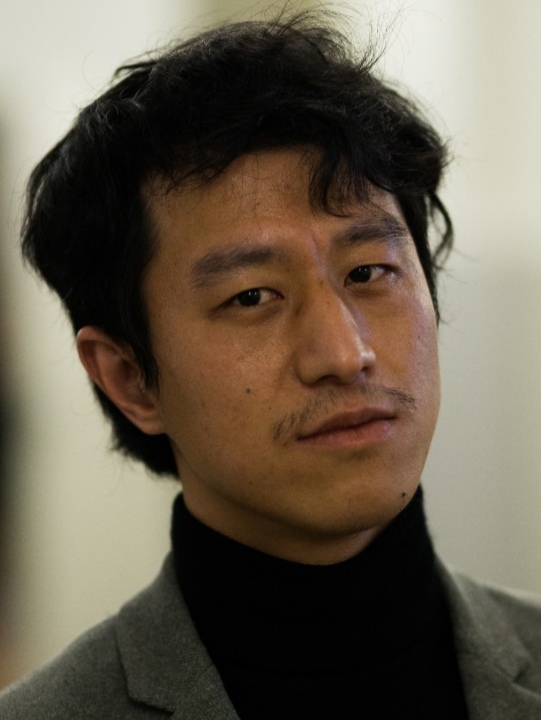}}]{Wei Jiang} 
received his B.S. degree in mechanical engineering and automation from Wuhan University of Technology, Wuhan, China, in 2011, and M.S. degree in automobile engineering from Beihang University, Beijing, China, in 2015 and Ph.D. degree in Automatic, Computer Engineering, Signal Processing and Images in CRIStAL, UMR CNRS 9189, Ecole Centrale de Lille, France, in 2018.
He was a postdoctoral researcher at Aalto University, Finland during 2019-2022. 
He was a Visiting Scholar at University of Cyprus, Cyprus and KU Leuven, Belgium. 
His research interests include distributed optimization/control, learning algorithms, vehicle platooning, and robotics.
\end{IEEEbiography}

\vspace{-0.2cm}

\vspace{-0.2cm}

\begin{IEEEbiography}[{\includegraphics[width=1in,height=1.25in,clip,keepaspectratio]{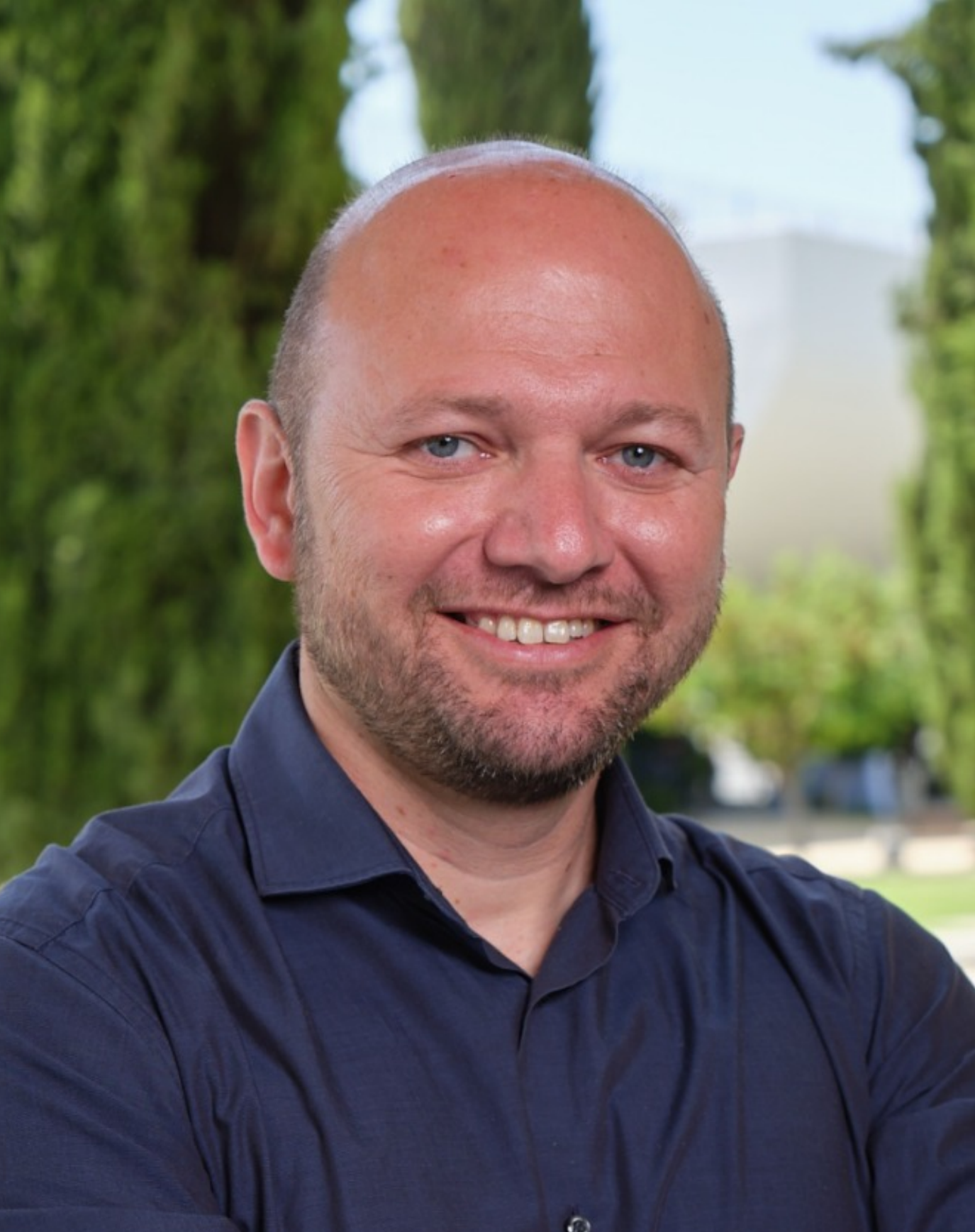}}]{Themistoklis Charalambous} (S'05, M'10, SM'20)
received his BA and M.Eng in Electrical and Information Sciences from Trinity College, Cambridge University. 
He completed his PhD studies in the Control Laboratory of the Engineering Department, Cambridge University. 
Following his PhD, he held postdoctoral positions at Imperial College London, at the Royal Institute of Technology (KTH), and at Chalmers University of Technology. 
In January 2017, he joined Department of Electrical Engineering and Automation, School of Electrical Engineering, Aalto University as a tenure-track Assistant Professor. In September 2018, he was awarded the Academy of Finland Research Fellowship and in July 2020 he was appointed as a tenured Associate Professor. In September 2021, he joined the Department of Electrical and Computer Engineering, University of Cyprus as a tenure-track Assistant Professor and he remains associated with Aalto University as a Visiting Professor. Since April 2023, he is also a Visiting Professor at the FinEst Centre for Smart Cities.

His primary research targets the design and analysis of (wireless) networked control systems that are stable, scalable and energy efficient. The study of such systems involves the interaction between dynamical systems, their communication and the integration of these concepts. As a result, his research is interdisciplinary combining theory and applications from control theory, communications, network and distributed optimization. 
\end{IEEEbiography}

\vspace{-0.2cm}

\begin{IEEEbiography}[{\includegraphics[width=1in,height=1.25in,clip,keepaspectratio]{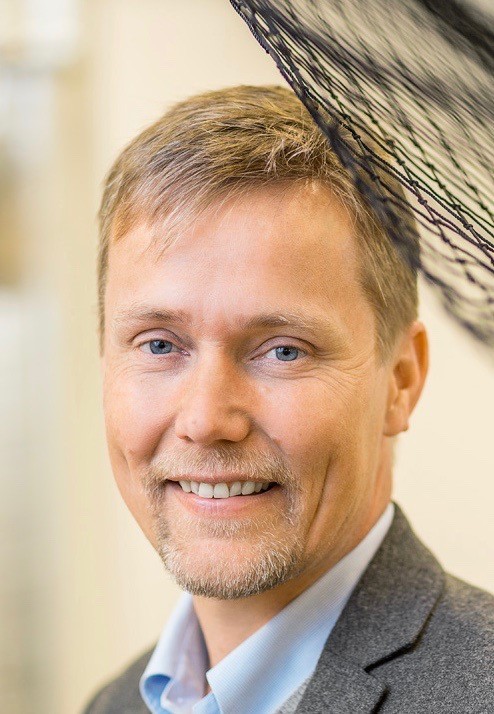}}]{Karl H. Johansson}
is Swedish Research Council Distinguished Professor in Electrical Engineering and Computer Science at KTH Royal Institute of Technology in Sweden and Founding Director of Digital Futures. He earned his MSc degree in Electrical Engineering and PhD in Automatic Control from Lund University. He has held visiting positions at UC Berkeley, Caltech, NTU and other prestigious institutions. His research interests focus on networked control systems and cyber-physical systems with applications in transportation, energy, and automation networks. For his scientific contributions, he has received numerous best paper awards and various distinctions from IEEE, IFAC, and other organizations. He has been awarded Distinguished Professor by the Swedish Research Council, Wallenberg Scholar by the Knut and Alice Wallenberg Foundation, Future Research Leader by the Swedish Foundation for Strategic Research. He has also received the triennial IFAC Young Author Prize and IEEE CSS Distinguished Lecturer. He is the recipient of the 2024 IEEE CSS Hendrik W. Bode Lecture Prize. His extensive service to the academic community includes being President of the European Control Association, IEEE CSS Vice President Diversity, Outreach $\&$ Development, and Member of IEEE CSS Board of Governors and IFAC Council. He has served on the editorial boards of Automatica, IEEE TAC, IEEE TCNS and many other journals. He has also been a member of the Swedish Scientific Council for Natural Sciences and Engineering Sciences. He is Fellow of both the IEEE and the Royal Swedish Academy of Engineering Sciences.
\end{IEEEbiography}

%
%
%
%
\end{document}